\documentclass[10pt,conference]{IEEEtran}

\usepackage{array}
\usepackage{textcomp}
\usepackage{stfloats}
\usepackage{url}
\usepackage{verbatim}
\usepackage{graphicx}
\usepackage{cite}
\usepackage{longtable}
\usepackage{xcolor}
\usepackage{booktabs}
\usepackage{pifont}
\usepackage{threeparttable}
\usepackage{tabularx}
\usepackage{array}

\usepackage[caption=false,font=footnotesize]{subfig}

\usepackage{amsmath,amssymb,amsfonts}
\usepackage{amsthm}
\newtheorem{theorem}{Theorem}
\newtheorem{lemma}{Lemma}

\theoremstyle{definition}

\usepackage[ruled,vlined,linesnumbered]{algorithm2e}

\makeatletter
\let\@afterindenttrue\@afterindentfalse
\@afterindentfalse
\makeatother
\IEEEoverridecommandlockouts

\begin{document}
\bstctlcite{IEEEexample:BSTcontrol}

\title{STAR-GS: Truthful and Visibility-Aware Resource Scheduling for Ground Station as a Service}
    \author{
        \IEEEauthorblockN{
            Zhiying Wang$^*$, Xiaojian Wang$^\dagger$, Huayue Gu$^\ddagger$,
            Zhishan Guo$^*$, Ruozhou Yu$^*$
        }
        \IEEEauthorblockA{
            $^*$Department of Computer Science, North Carolina State University,
            Raleigh, NC 27606, USA
            \\
            $^\dagger$Department of Computer Science and Engineering,
            University of Colorado Denver, Denver, CO 80204, USA
            \\
            $^\ddagger$Department of Information Technology,
            Kennesaw State University, Kennesaw, GA 30144, USA
        }
    
    \thanks{
    Zhiying Wang, Zhishan Guo, and Ruozhou Yu (\{zwang266, zguo32, ryu5\}@ncsu.edu) are with the
    Department of Computer Science, North Carolina State University, Raleigh, NC 27606, USA.
    Xiaojian Wang (\mbox{xiaojian.wang@ucdenver.edu}) is with the Department of Computer Science and Engineering, University of Colorado Denver, Denver, CO 80204, USA. Huayue Gu (hgu2@kennesaw.edu) is with the Department of Information Technology, Kennesaw State University, Kennesaw, GA 30144, USA. This work was supported in part by the National Science Foundation under grants 2433966, 2414523, 2246672, and 2521121.  The information reported herein does not reflect the position or the policy of the funding agencies.
    }
    }

\maketitle

\begin{abstract}
The rapid growth of Low Earth Orbit satellite constellations has created increasing demand for efficient and scalable downlink services. Ground Station as a Service (GSaaS) provides an on-demand access model for satellite operators, but commercial GSaaS providers must schedule limited ground-station bandwidth among multiple satellites with heterogeneous data demands, overlapping visibility windows, strict deadlines, and strategic bidding behaviors. This paper studies GSaaS resource scheduling from a ground-station-centric perspective, where the provider jointly determines task admission, ground-station assignment, bandwidth allocation, and payments. Under satellite orbital dynamics, bandwidth constraints, and downlink task deadlines, maximizing the provider's revenue is NP-hard. To address this challenge, we propose STAR-GS, a truthful and feasibility-aware scheduling mechanism that combines bid-aware admission control, best-fit ground-station assignment, Earliest Deadline First (EDF)-based bandwidth scheduling, and critical-payment pricing. By integrating auction theory with schedulability analysis, STAR-GS incentivizes task owners to truthfully report their private valuations while ensuring that admitted tasks can be feasibly completed before their deadlines. Simulations using Ansys Systems Tool Kit (STK) show that STAR-GS consistently achieves higher revenue than heuristic baselines, obtains near-MILP performance with substantially lower runtime, and scales smoothly to workloads containing up to 900 tasks.
\end{abstract}

\begin{IEEEkeywords}
Ground Station as a Service, LEO satellites, resource scheduling, truthful auction, mechanism design, bandwidth allocation
\end{IEEEkeywords}

\section{Introduction}
\label{sec:introduction}

The rapid growth in the number of Low Earth Orbit (LEO) satellites, together with applications such as remote sensing, IoT, and meteorological observation \cite{10945753,9442378,9840374}, has created substantial demand for downlink data transmission. Despite that, satellite-ground downlink remains a critical bottleneck, constrained by limited satellite-ground visibility and scarce ground station availability for most LEO constellations. Except for designated broadband LEO constellations such as Starlink or Kuiper \cite{osoro2021techno}, building and operating dedicated ground stations is costly, and each ground station offers only limited coverage. To address the gap, commercial Ground Station as a Service (GSaaS) providers emerge \cite{11044563}, each operating a shared ground stations infrastructure with wide coverage, and leasing their ground station resources to satellite operators on demand.

As LEO satellites move rapidly, their downlink demands vary significantly across locations and time. This spatio-temporal imbalance in downlink demand can lead to congestion at hotspot locations while leaving ground-station resources underutilized elsewhere~\cite{929853,10520815}. Effective satellite-ground downlink resource scheduling is crucial for improving resource utilization while satisfying the demands of diverse LEO applications. To optimize downlink resources in GSaaS, existing works adopt a satellite-centric perspective, where each satellite acts as the decision maker and selects one or multiple ground stations to minimize downlink latency, maximize throughput, or reduce operational cost~\cite{11044527,11044563}. In this scenario, ground stations are modeled as passive resource providers with predefined capacities, while satellites compete or coordinate for the use of ground resources on their own. The model works well when the downlink load is mild and/or static~\cite{Duncan}. However, when ground stations are shared and capacity-constrained, serving multiple satellites with heterogeneous demands and overlapped visibility windows, existing satellite-centric solutions fail to satisfy application requirements and resource utilization~\cite{s25247538}. 

In this paper, we study revenue-maximizing GSaaS scheduling from a ground-station provider's perspective. The problem presents three key challenges: short and overlapping satellite--ground contacts complicate deadline satisfaction~\cite{11329400}; private task valuations require incentive-compatible admission and pricing; and task admission, ground-station assignment, and bandwidth scheduling are tightly coupled. Together, these challenges make revenue maximization non-trivial.

To address these intricate challenges, this paper makes the following major contributions:

\begin{itemize}
    \item We study GSaaS resource scheduling under dynamic satellite--ground visibility, task-specific deadlines, and provider reserve-price constraints, where the provider jointly determines task admission, ground-station assignment, bandwidth allocation, and payment. We formulate this setting as a GSaaS resource scheduling and revenue maximization problem, and prove that the problem is NP-hard.

    \item We propose Scheduling via Truthful Auction and Resource-aware Ground-Station Selection (STAR-GS), a computationally efficient, feasibility-aware scheduling and truthful auction mechanism that combines bid-aware admission, best-fit ground-station assignment, Earliest Deadline First (EDF) scheduling, and critical payment pricing.

    \item We theoretically prove that STAR-GS ensures the completion of admitted tasks through feasibility-aware scheduling, guarantees truthful reporting of private task values, satisfies individual rationality for task owners, and enforces provider-side reserve-price protection.
    
    \item With extensive simulations, we show that STAR-GS achieves near-optimal revenue, outperforms state-of-the-art baselines, and maintains low time complexity and high scalability.
\end{itemize}

\noindent \textbf{Organization.} \S\ref{sec:related_work} reviews related work on GSaaS resource scheduling and auction-based resource allocation. \S\ref{sec:system_model} presents the system model and formulates the GSaaS revenue maximization problem. \S\ref{sec:alg_design} introduces the STAR-GS mechanism and analyzes its theoretical properties. \S\ref{sec:simulation} evaluates the performance of STAR-GS. \S\ref{sec:conclusion} concludes this paper.

\section{Related Work}
\label{sec:related_work}

Research on GSaaS and satellite--ground networking has studied service architecture, decentralized ground-station sharing, satellite-centric ground-station selection, and learning-based downlink optimization. Existing works have considered sustainable offloading through commercial ground stations~\cite{11044527}, blockchain-based GS federation for privacy-preserving nanosatellite communication~\cite{10154485}, cybersecurity analysis of GSaaS architectures~\cite{10115903}, and reinforcement-learning-based downloading under atmospheric uncertainty~\cite{10001260}. Recent studies further optimize multi-provider GSaaS usage and satellite-centric site selection, such as SkyGS~\cite{10858567} and scalable GSaaS site clustering~\cite{kim2025scalablegroundstationselection}. These studies provide effective solutions for scenarios where satellites select suitable ground-station services to improve latency, reliability, cost, or coverage. In contrast, our work focuses on the complementary provider-side setting, where a GSaaS operator manages shared ground-station resources and jointly decides task admission, ground-station assignment, bandwidth scheduling, and pricing.

Resource scheduling for satellite--ground station networks has also been widely investigated. Existing studies have designed constellation-wide scheduling pipelines to mitigate ground-track congestion~\cite{10.1145/3680207.3765249}, reinforcement-learning-based hierarchical scheduling to improve utilization and profit~\cite{CHENG2026123200}, and mixed-integer or genetic-optimization models for satellite--ground network planning~\cite{REN2025126303}. These approaches are well suited to improving scheduling feasibility, throughput, resource utilization, or planning efficiency when task requirements and priorities are specified by the system. Our work considers a different commercial GSaaS setting, where task owners may have private valuations and the provider needs to make admission and pricing decisions while still satisfying visibility and deadline constraints.

Auction mechanisms have been widely used for allocating scarce resources with private valuations. Existing studies have designed pricing and auction mechanisms for edge computing and communication-resource allocation, achieving properties such as truthfulness, budget balance, and efficient resource utilization~\cite{10979921},~\cite{10887300}. Other game-theoretic or Bayesian approaches further optimize surplus and resource efficiency in terrestrial networks. These mechanisms provide useful tools for incentive-aware resource allocation in many networked systems. Compared with these settings, GSaaS scheduling introduces additional structure from intermittent satellite--ground visibility, task-specific deadlines, and ground-station bandwidth constraints. Therefore, our work adapts the mechanism-design perspective to a visibility-aware and deadline-aware GSaaS scheduling problem.

Building on these lines of work, STAR-GS studies provider-side GSaaS resource scheduling by jointly considering revenue maximization, satellite--ground visibility, task deadlines, bandwidth feasibility, and private task valuations. Unlike satellite-centric selection, conventional scheduling, and general auction mechanisms, STAR-GS integrates provider-side admission, assignment, bandwidth scheduling, bid-aware pricing, and visibility- and deadline-aware schedulability checking into a revenue-aware, feasibility-aware, and incentive-compatible mechanism.

\section{System Model and Problem Formulation}
\label{sec:system_model}

\begin{table}[t]
\centering
\caption{Main Notations}
\label{tab:notations}
\footnotesize
\setlength{\tabcolsep}{1.5pt}
\renewcommand{\arraystretch}{0.88}
\begin{tabularx}{\columnwidth}{
>{\raggedright\arraybackslash}p{0.13\columnwidth}
>{\raggedright\arraybackslash}X
>{\raggedright\arraybackslash}p{0.13\columnwidth}
>{\raggedright\arraybackslash}X}
\hline
\textbf{Notation} & \textbf{Description} & \textbf{Notation} & \textbf{Description} \\
\hline
$A_{ik}$ & Arrival slot
& $\mathcal{A}_j(t)$ & Serviceable tasks at GS $j$ \\

$B_j$ & GS bandwidth
& $b_{ijkt}$ & Allocated bandwidth \\

$\mathrm{bid}_{ik}$ & Unit bid
& $c_{ijk}$ & Processing requirement \\

$\mathcal{C}_{ij}$ & Visibility windows
& $D_{ik}$ & Task deadline \\

$\mathcal{F}_{ik}$ & Candidate GS set
& $\mathcal{G}$ & GS set \\

$g(i,k)$ & Assigned GS
& $I_j(t)$ & Unused slot fraction \\

$\mathcal{K}_i$ & Tasks of satellite $i$
& $M_{ij}$ & Number of windows \\

$p_{ik}$ & Unit price
& $P_{ik}$ & Total payment \\

$\mathcal{R}$ & Request set
& $\mathcal{R}^{+}$ & Reserve-eligible requests \\

$\mathcal{S}$ & Satellite set
& $\mathrm{Slack}_j$ & Residual slack \\

$\mathcal{T}$ & Time-slot set
& $\mathcal{T}_{ijk}$ & Feasible service window \\

$u^{\mathrm{GS}}_{ik}$ & Provider utility
& $u^{\mathrm{task}}_{ik}$ & Task-owner utility \\

$v_{ik}$ & Private value
& $\mathcal{W}_j$ & Tasks assigned to GS $j$ \\

$x_{ijk}$ & Assignment indicator
& $y_{ik}$ & Completion indicator \\

$\alpha_{ijkt}$ & Service fraction
& $\Delta t$ & Slot duration \\

$\Delta\tau_{ijkt}$ & Transmitted data
& $\eta_{ij}$ & Spectral efficiency \\

$\rho$ & Reserve price
& $\tau_{ik}$ & Data demand \\

$\tau^{\mathrm{rem}}_{ik}(t)$ & Remaining data
& $\mathcal{W}$ & Candidate/winner set \\
\hline
\end{tabularx}
\end{table}

In this section, we study GSaaS resource scheduling from the perspective of a service provider, as shown in Fig.~\ref{fig:gsaas_system_model}. We consider a GSaaS service provider operating a set of geo-distributed ground stations and serving a set of individual satellites that request downlink services. Each satellite submits downlink tasks to the GSaaS provider before the corresponding deadlines, and the provider actively determines task admission, ground-station assignment, task scheduling, bandwidth allocation, and payments.

\begin{figure}[t]
    \centering
    \includegraphics[width=1\linewidth]{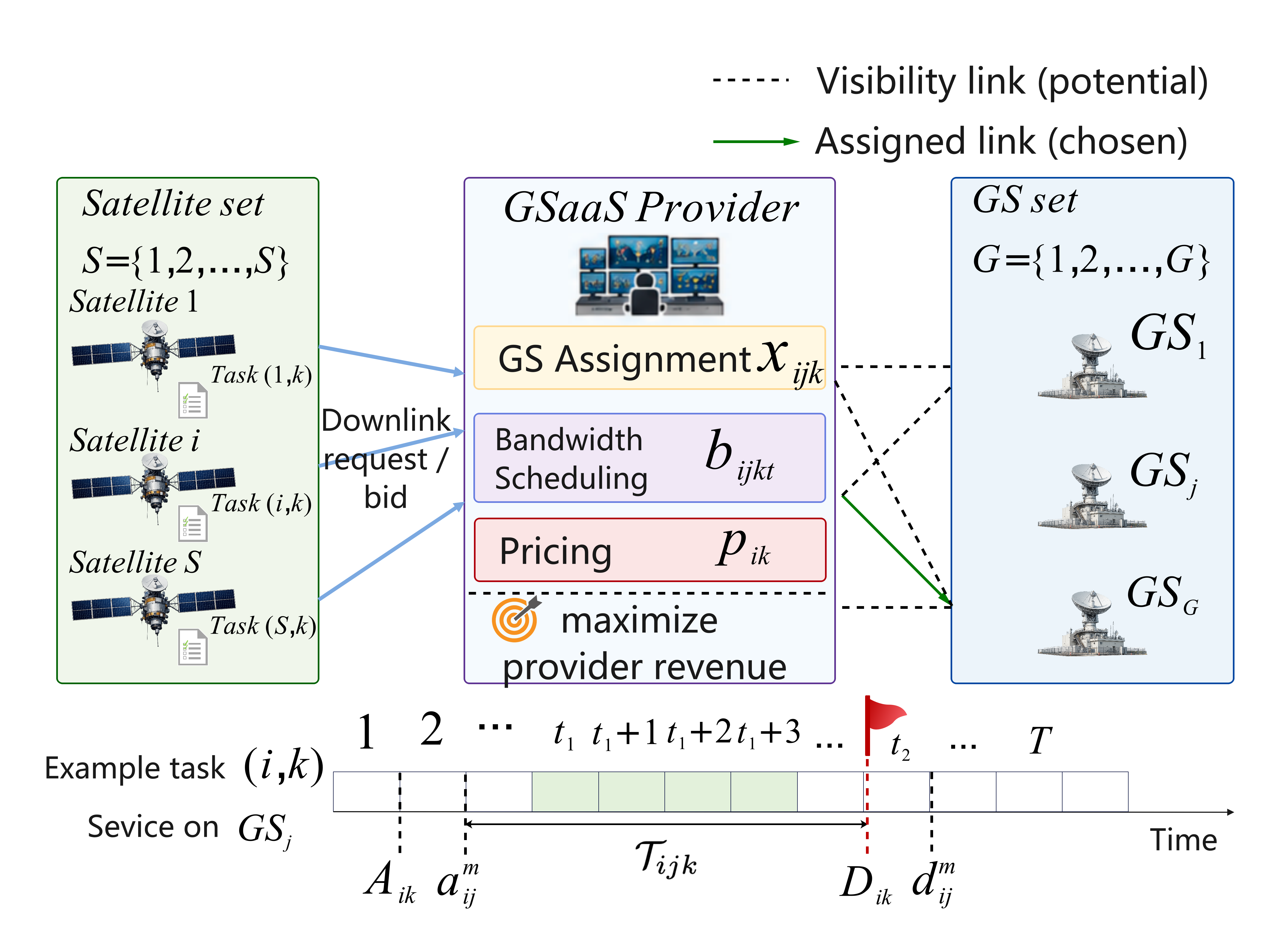}
    \caption{
System model of the GSaaS resource allocation framework. 
Satellites submit downlink requests and bids to the GSaaS provider, which determines task admission, GS assignment \(x_{ijk}\), bandwidth allocation \(b_{ijkt}\), and pricing \(p_{ik}\). Dashed links indicate potential visibility between satellites and ground stations, while solid links indicate chosen assignments. The bottom timeline shows how the feasible service window \(T_{ijk}\) of task \((i,k)\) is formed by intersecting its lifetime \([A_{ik},D_{ik}]\) with the visibility window \([a^m_{ij},d^m_{ij}]\).
}
    \label{fig:gsaas_system_model}
\end{figure}

\subsection{System Entities and Time Model}

Let $\mathcal{S} = \{1, 2, \dots, S\}$ denote the set of satellites that generate downlink requests within a scheduling horizon. Similarly, let $\mathcal{G} = \{1, 2, \dots, G\}$ denote the set of commercial Ground Stations (GSs) operated by the service provider.

Following existing work~\cite{6094268}, we assume that time is discretized into a set of equal-duration slots, indexed by $\mathcal{T} = \{1, 2, \dots, T\}$, where each slot has duration $\Delta t$. This common discretization maps heterogeneous visibility windows onto a finite scheduling horizon and can be extended to irregular time slots by using slot-dependent durations.

\subsection{Satellite Downlink Requests}

Each downlink request is modeled as a persistent communication task characterized by its arrival time, deadline, and data demand:
\begin{equation}
(A_{ik}, D_{ik}, \tau_{ik}), \quad i \in \mathcal{S}, k \in \mathcal{K}_i .
\end{equation}
Here, $A_{ik}$ denotes the arrival time slot of the $k$-th downlink task generated by satellite $i$, $D_{ik}$ denotes the deadline by which the task must be completed, and $\tau_{ik}$ denotes the amount of data to be downlinked.
We assume that $\tau_{ik}$ is declared by the task owner when the request is submitted. This assumption is suitable for buffered downlink tasks, such as stored imagery, telemetry, and mission files, whose data volume is known before scheduling.

Each downlink task can be requested by a different application-level user through satellite tasking platforms such as Planet or SkyFi, and thus may have its own private valuation~\cite{planet_tasking_api,skyfi_tasking}. We therefore treat each task owner as an independent auction participant and define value, utility, and payment at the task level; a task is paid only if its cumulative downlink reaches $\tau_{ik}$ by deadline $D_{ik}$.

\subsection{Commercial Ground Stations and Visibility}

We consider a multi-GS setting in which a GSaaS provider operates a set of geographically distributed ground stations $\mathcal{G}$. Each ground station $j \in \mathcal{G}$ is equipped with available downlink bandwidth $B_j$. A ground station can serve multiple satellites in the same time slot through Frequency-Division Multiple Access (FDMA)~\cite{myung2008single}. We consider a finite planning horizon consisting of the time slots in $\mathcal{T}$. Due to orbital dynamics and line-of-sight constraints, a satellite can communicate with a ground station only during specific contact intervals within this planning horizon. Since a satellite may have multiple contacts with the same ground station during the horizon, we denote the set of visibility windows $\mathcal{C}_{ij}$ between satellite $i$ and ground station $j$ as
\begin{equation}
\mathcal{C}_{ij} = \{[a_{ij}^{m}, d_{ij}^{m}]\}_{m=1}^{M_{ij}},
\end{equation}
where $M_{ij}$ is the number of visibility windows between satellite $i$ and ground station $j$, and $[a_{ij}^{m}, d_{ij}^{m}]$ denotes the $m$-th visibility window.
These visibility windows are deterministically derived from the predicted satellite orbit and the geographic location of the ground station using line-of-sight geometry.

For each downlink task $(i,k)$ and ground station $j$, we define its feasible service window as the set of time slots that are both within the task lifetime and covered by at least one visibility window:
\begin{equation}\label{def:fea-window}
\mathcal{T}_{ijk}
\triangleq
\left( \bigcup_{m=1}^{M_{ij}} [a_{ij}^{m}, d_{ij}^{m}] \right)
\cap [A_{ik}, D_{ik}].
\end{equation}

\subsection{Bandwidth Allocation and Ground Station Assignment}

We consider two scheduling decisions made by the GSaaS provider: ground station assignment and bandwidth allocation. Let \(b_{ijkt} \ge 0\) denote the bandwidth allocated by GS \(j\) to downlink task \((i,k)\) at time slot \(t\). The ground station assignment variable is defined as
\begin{equation}
\label{assignment_variable}
x_{ijk}\in\{0,1\},
\quad \forall i\in\mathcal{S},\; k\in\mathcal{K}_i,\; j\in\mathcal{G},
\end{equation}
where \(x_{ijk}=1\) indicates that downlink task \((i,k)\) is assigned to GS \(j\), otherwise \(x_{ijk}=0\). The bandwidth allocation and ground station assignment decisions are subject to the following physical and scheduling constraints.

At each time slot, the aggregate bandwidth allocated by each GS cannot exceed its available bandwidth:
\begin{equation}
\label{total_bandwidth_constraint}
\sum_{i \in \mathcal{S}} \sum_{k \in \mathcal{K}_i} b_{ijkt} \le B_j,
\quad \forall j \in \mathcal{G},\; t \in \mathcal{T}.
\end{equation}

A task can receive bandwidth from a GS only if it is assigned to that GS. This coupling between bandwidth allocation and GS assignment is enforced by
\begin{equation}
\label{bandwidth_constraint}
0 \le b_{ijkt} \le x_{ijk} B_j,
\quad \forall i \in \mathcal{S},\; k \in \mathcal{K}_i,\; j \in \mathcal{G},\; t \in \mathcal{T}.
\end{equation}

In addition, a GS can serve a task only when the satellite is visible to the GS and the task has arrived but not yet expired. Therefore, bandwidth allocation is restricted to the feasible service window \(\mathcal{T}_{ijk}\):
\begin{equation}
\label{visibility_constraint}
b_{ijkt} = 0,
\quad
\forall t \notin \mathcal{T}_{ijk},
\quad
\forall i \in \mathcal{S},\; k \in \mathcal{K}_i,\; j \in \mathcal{G}.
\end{equation}

Following practical GSaaS contact-reservation operations~\cite{aws_contact}, we model each downlink task as being explicitly reserved at a specific ground station. This does not restrict a satellite to a single ground station: different tasks from the same satellite may be assigned to different ground stations across different contact opportunities. Therefore, each downlink task can be assigned to at most one primary GS, which is captured by
\begin{equation}
\label{single_gs_constraint}
\sum_{j \in \mathcal{G}} x_{ijk} \le 1,
\quad \forall i \in \mathcal{S},\; k \in \mathcal{K}_i.
\end{equation}

\subsection{Request Completion}

Let $\eta_{ij}$ denote the spectral efficiency (bps/Hz) of the downlink between satellite $i$ and ground station $j$.
We model $\eta_{ij}$ as a fixed effective spectral efficiency for each satellite--GS pair, representing an average link rate over the scheduling horizon.
The amount of data transmitted for downlink task $(i,k)$ from satellite $i$ to GS $j$ during time slot $t$ is given by
\begin{equation}
\label{data_transmit}
\Delta \tau_{ijkt} = b_{ijkt} \cdot \eta_{ij} \cdot \Delta t,
\forall i \in \mathcal{S},\; k \in \mathcal{K}_i,\; j \in \mathcal{G},\; t \in \mathcal{T}.
\end{equation}

Since a downlink task may span multiple time slots, we track its remaining data volume over time. Let $\tau_{ik}^{\mathrm{rem}}(t)$ denote the remaining amount of data of task $(i,k)$ at the beginning of time slot $t$. Its evolution is given by
\begin{equation}
\label{remaining_data}
\begin{aligned}
\tau_{ik}^{\mathrm{rem}}(t+1)
=
\max\Bigg\{
0,\;
\tau_{ik}^{\mathrm{rem}}(t)
-
\sum_{j \in \mathcal{G}} \Delta \tau_{ijkt}
\Bigg\},
\\
\forall i \in \mathcal{S},\; k \in \mathcal{K}_i .
\end{aligned}
\end{equation}

The completion indicator is defined according to whether the remaining data volume reaches zero before the deadline:
\begin{equation}
\label{completed_constrain}
y_{ik}
=
\begin{cases}
1, & \text{if } \tau_{ik}^{\mathrm{rem}}(D_{ik}+1)=0,\\
0, & \text{otherwise.}
\end{cases}
\end{equation}

If a downlink task cannot be completed by its deadline, it will be dropped and yield no revenue to the service provider.

\subsection{Pricing, Utilities, and Economic Constraints}

We model GSaaS pricing as an auction, where the GSaaS provider acts as the seller/auctioneer and downlink task owners act as buyers competing for limited downlink resources. Each task owner submits one downlink request and a unit bid $\mathrm{bid}_{ik}$ for completing this request. Following the standard single-parameter private-value setting~\cite{a2fd2512-18e4-3e86-a4f2-2f4443994eb2}, each task $(i,k)$ has a private unit value $v_{ik}$, which represents the task owner's valuation for successfully completing the task. This value is private information of the task owner and is not directly observable by the GSaaS provider. As a result, a task owner may strategically misreport its willingness to pay through $\mathrm{bid}_{ik}$, either by underbidding to reduce its potential payment or by overbidding to increase its chance of being admitted and scheduled.

For each admitted task $(i,k)$, the GSaaS provider charges a unit service price $p_{ik}$. Since the task requires the transmission of $\tau_{ik}$ units of data, the total payment charged to the owner of task $(i,k)$ is $P_{ik}=p_{ik}\tau_{ik}y_{ik}$. If the task is not completed before its deadline, i.e., $y_{ik}=0$, no payment is collected.

The utility of the owner of task $(i,k)$ is defined as
\begin{equation}
\label{user_utility}
u^{task}_{ik}=
\begin{cases}
(v_{ik}-p_{ik})\tau_{ik}, & \text{if } y_{ik}=1,\\
0, & \text{otherwise}.
\end{cases}
\end{equation}

The GSaaS provider uses a reserve unit price to capture its minimum acceptable payment for serving downlink traffic. In this paper, we use a uniform reserve price $\rho$ for all tasks as a simplifying model, representing the provider's operating expense or opportunity cost per unit downlink data. This model can be extended to task-dependent or GS-dependent reserve prices when heterogeneous operating costs are considered. Under the proposed mechanism, a task is eligible for admission only if its bid satisfies $\mathrm{bid}_{ik}\ge \rho$.

The utility of the GSaaS provider obtained from serving task $(i,k)$ is defined as
\begin{equation}
u^{GS}_{ik}
=
\begin{cases}
(p_{ik}-\rho)\tau_{ik}, & \text{if } y_{ik}=1,\\
0, & \text{otherwise}.
\end{cases}
\end{equation}
Thus, for any admitted and completed task, the proposed mechanism enforces the transaction price to satisfy
\begin{equation}
\label{price_constraint}
\rho \le p_{ik} \le \mathrm{bid}_{ik}, \quad \forall (i,k)\text{ with }y_{ik}=1.
\end{equation}
This constraint ensures that each completed winning task is charged at least the provider's reserve price and no more than its reported bid.

We evaluate the mechanism by individual rationality, truthfulness, and reserve-price protection. Individual rationality requires truthful task owners to obtain non-negative utility; truthfulness, or dominant-strategy incentive compatibility, requires truthful bidding to maximize each task owner's utility regardless of others' bids; and reserve-price protection requires every completed winning task to be charged at least $\rho$.

\subsection{Optimization Objective}

Based on the pricing and utility model defined above, the GSaaS provider aims to maximize its total revenue. The revenue maximization problem can be formulated as
\begin{equation}
\label{eq:opt_problem}
\begin{aligned}
\max_{\{b_{ijkt},\,x_{ijk},\,p_{ik}\}} \quad
 \sum_{i\in \mathcal{S}}&
\sum_{k\in \mathcal{K}_i}
p_{ik}\tau_{ik}y_{ik} \\
\text{s.t.}\quad
\text{Constraints } 
&\eqref{assignment_variable}-
\eqref{completed_constrain},
\eqref{price_constraint}.
\end{aligned}
\end{equation}

In Problem~\eqref{eq:opt_problem}, the GSaaS provider jointly optimizes bandwidth allocation, ground-station assignment, and payment determination to maximize total revenue, subject to assignment, bandwidth-capacity, visibility-window, task-completion, and pricing constraints.

\begin{theorem}
Problem~\eqref{eq:opt_problem} is NP-hard.
\end{theorem}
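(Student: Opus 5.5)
We prove NP-hardness by a polynomial-time reduction from the 0--1 Knapsack problem, which is NP-hard in its decision version. A Knapsack instance consists of $n$ items with positive integer weights $w_1,\dots,w_n$, values $\nu_1,\dots,\nu_n$, a capacity $W$, and a target $V$. It asks whether some subset $S$ satisfies $\sum_{l\in S} w_l\le W$ and $\sum_{l\in S}\nu_l\ge V$. The plan is to build a restricted special case of Problem~\eqref{eq:opt_problem} whose optimum reaches $V$ exactly when the Knapsack instance is a yes-instance. Solving the general problem would then solve Knapsack.

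\textbf{Construction.} We use a single ground station, $\mathcal{G}=\{1\}$, with bandwidth $B_1=1$. We take $n$ satellites, each with exactly one task, indexed $(l,1)$. Set $\Delta t=1$ and $\eta_{l1}=1$, and let the horizon be $\mathcal{T}=\{1,\dots,W\}$. Every satellite has one visibility window $[1,W]$, and every task has $A_{l1}=1$, $D_{l1}=W$, so $\mathcal{T}_{l1k}=\mathcal{T}$ for all tasks. The data demand is $\tau_{l1}=w_l$. The bids are $\mathrm{bid}_{l1}=\nu_l/w_l$, and the reserve is $\rho=0$, or any value below $\min_l \nu_l/w_l$. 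All of this has polynomial size in the Knapsack encoding; $\mathcal{T}$ has $W$ slots, which is pseudo-polynomial, and we return to this below.

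\textbf{Equivalence.} Constraint~\eqref{price_constraint} gives $p_{l1}\le \mathrm{bid}_{l1}$. Setting $p_{l1}=\mathrm{bid}_{l1}$ is always feasible and maximizes revenue, so any completed task contributes exactly $\nu_l$. Hence the optimal revenue equals the maximum of $\sum_{l:y_{l1}=1}\nu_l$ over completable task sets. A set $S$ of tasks can all be completed if and only if $\sum_{l\in S}w_l\le \sum_{t}B_1\eta\Delta t=W$:
\begin{itemize}
\item \emph{Necessity} follows by summing \eqref{total_bandwidth_constraint} over all slots, combined with \eqref{data_transmit} and \eqref{remaining_data}.
\item \emph{Sufficiency} follows by serving the tasks of $S$ sequentially at full bandwidth. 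Fractional $b$ is allowed, so the total work $\sum_{l\in S}w_l\le W$ fits within the common window.
\end{itemize}
Thus the optimum is at least $V$ exactly when the Knapsack instance is a yes-instance.

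\textbf{Main obstacle: the horizon length.} The number of slots $T=W$ is pseudo-polynomial, so the reduction above only establishes weak NP-hardness when measured against a binary encoding of $W$. We avoid this by not encoding capacity in the number of slots. Instead, we use a single slot $T=1$ with $\Delta t=1$ and set $B_1=W$. Capacity then enters as a number and the instance size is truly polynomial. The same argument goes through: with one slot, the completable sets are exactly those with $\sum_{l\in S} w_l\le W$. This gives NP-hardness of Problem~\eqref{eq:opt_problem}.
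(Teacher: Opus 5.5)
Your final construction (one slot, $\Delta t=\eta_{l1}=1$, $B_1=W$, $\tau_{l1}=w_l$, $\mathrm{bid}_{l1}=\nu_l/w_l$, $\rho=0$, and $p_{l1}=\mathrm{bid}_{l1}$) is exactly the paper's reduction from 0--1 knapsack, and the argument is correct. One small correction to your aside: the $T=W$ construction does not establish even weak NP-hardness. Its output is exponential in the binary input size, and knapsack with unary capacity is polynomial-time solvable, so that construction proves nothing. The single-slot reduction gives ordinary (weak) NP-hardness, which is exactly what the theorem claims.
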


\begin{proof}
We reduce the 0--1 knapsack problem to Problem~\eqref{eq:opt_problem}. Consider an arbitrary knapsack instance with item set $\mathcal N=\{1,\ldots,N\}$, item weights $\{w_n\}$, item values $\{q_n\}$, and capacity $C$. We construct a GSaaS instance with one ground station and one time slot. The ground-station bandwidth is set to $B_1=C$, and we set $\Delta t=1$ and $\eta_{n1}=1$ for all satellites $n$.

For each item $n\in\mathcal N$, we create one satellite $i=n$ with a single downlink task $k=1$, i.e., $\mathcal{S}=\{1,\ldots,N\}$ and $\mathcal{K}_n=\{1\}$. The task $(n,1)$ has demand $\tau_{n1}=w_n$, arrival time $A_{n1}=1$, deadline $D_{n1}=1$, feasible service window $\mathcal{T}_{n,1,1}=\{1\}$, unit bid $\mathrm{bid}_{n1}=q_n/w_n$, and reserve price $\rho=0$.

In this constructed instance, completing task $(n,1)$ requires exactly $w_n$ units of bandwidth in the only slot, while the total available bandwidth is $C$. Thus, any feasible set of completed tasks corresponds exactly to a feasible knapsack selection. Moreover, by setting $p_{n1}=\mathrm{bid}_{n1}$, the revenue obtained from task $(n,1)$ is
\begin{equation}
    p_{n1}\tau_{n1}
=
\mathrm{bid}_{n1}\tau_{n1}
=
\frac{q_n}{w_n}w_n
=
q_n.
\end{equation}

Therefore, maximizing the GSaaS revenue is equivalent to maximizing the total value of selected knapsack items. Since 0--1 knapsack is NP-hard, Problem~\eqref{eq:opt_problem} is NP-hard.
\end{proof}

\section{STAR-GS Design}
\label{sec:alg_design}

Beyond its NP-hardness, Problem~\eqref{eq:opt_problem} tightly couples task admission, ground-station assignment, bandwidth allocation, and payment determination under visibility and deadline constraints. STAR-GS addresses this coupling through bid-aware admission, best-fit ground-station assignment, EDF-based schedulability testing, and critical-payment pricing. Intuitively, bid-aware admission prioritizes valuable tasks, EDF prevents infeasible admissions, best-fit assignment preserves flexible GS resources, and critical payments ensure truthful bidding. For feasibility checking, STAR-GS converts fractional bandwidth allocation into equivalent preemptive full-bandwidth time-sharing under a fixed GS assignment, thereby reducing bandwidth feasibility to deadline schedulability.

\subsection{From Bandwidth Allocation to Deadline Scheduling}
\label{subsec:bandwidth_to_scheduling}

The optimization problem in~\eqref{eq:opt_problem} is formulated through the bandwidth allocation variables \(b_{ijkt}\). For a fixed GS assignment, however, downlink feasibility can be viewed as a real-time scheduling problem: a GS acts as a processor, its bandwidth as processor capacity, a downlink task as a real-time job, the task demand as the processing requirement, and the feasible service window \(\mathcal{T}_{ijk}\) as the job availability interval.

Consider a fixed task \((i,k)\) assigned to GS \(j\). If GS \(j\) allocates its full bandwidth \(B_j\) to this task for one slot, the transmitted data is \(B_j \eta_{ij} \Delta t\). Thus, the normalized processing requirement of task \((i,k)\) on GS \(j\), measured in full-bandwidth slots, is
\begin{equation}
\label{eq:processing_requirement}
c_{ijk}
=
\frac{\tau_{ik}}{B_j \eta_{ij}\Delta t}.
\end{equation}
The task is feasible on GS \(j\) only if it receives at least \(c_{ijk}\) full-bandwidth slot equivalents within its feasible service window \(\mathcal{T}_{ijk}\) in \eqref{def:fea-window}.

This scheduling view avoids explicitly searching over all fractional bandwidth allocations: if GS $j$ allocates bandwidth $b_{ijkt}$ to task $(i,k)$ in slot $t$, this is equivalent to serving the task with full bandwidth $B_j$ for an $\alpha_{ijkt}=b_{ijkt}/B_j$ fraction of the slot. By the per-slot bandwidth capacity constraint in \eqref{total_bandwidth_constraint}, we have
\begin{equation}
\sum_{i\in\mathcal{S}}\sum_{k\in\mathcal{K}_i} \alpha_{ijkt}\leq 1 .
\end{equation}
Therefore, any feasible fractional allocation within a slot can be serialized into a full-bandwidth time-sharing schedule without changing the transmitted data. Conversely, any full-bandwidth time-sharing schedule induces a feasible bandwidth allocation by setting the average bandwidth in each slot. Hence, bandwidth allocation and preemptive full-bandwidth scheduling are equivalent for feasibility analysis.

\begin{lemma}[Bandwidth--Scheduling Equivalence]
\label{lem:bandwidth_scheduling_equivalence}
For a fixed GS $j$ and a fixed assigned task set $\mathcal{W}_j$, there exists a feasible bandwidth allocation that completes every task in $\mathcal{W}_j$ before its deadline if and only if there exists a feasible preemptive full-bandwidth time-sharing schedule that completes the same task set before the same deadlines.
\end{lemma}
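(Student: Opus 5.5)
We prove the two directions separately by an explicit slot-by-slot construction. Throughout, GS $j$ and its assigned set $\mathcal{W}_j$ are fixed. A \emph{preemptive full-bandwidth time-sharing schedule} means the following: within each slot $t$, GS $j$ devotes its entire bandwidth $B_j$ to at most one task at any instant. Task $(i,k)$ is served for a total fraction $\alpha_{ijkt}\in[0,1]$ of slot $t$, with $\sum_{(i,k)\in\mathcal{W}_j}\alpha_{ijkt}\le 1$. Task $(i,k)$ may be served only at slots $t\in\mathcal{T}_{ijk}$, and it is complete once $\sum_{t\le D_{ik}}\alpha_{ijkt}\ge c_{ijk}$, with $c_{ijk}$ as in \eqref{eq:processing_requirement}. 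The key identity is that serving $(i,k)$ at full bandwidth for a fraction $\alpha$ of slot $t$ transmits $B_j\eta_{ij}\alpha\Delta t$ data. By \eqref{data_transmit}, this equals the data sent by a constant allocation $b_{ijkt}=\alpha B_j$ over the whole slot.

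\textbf{($\Rightarrow$)} Take a feasible allocation $\{b_{ijkt}\}$ satisfying \eqref{total_bandwidth_constraint}, \eqref{bandwidth_constraint}, \eqref{visibility_constraint} that completes every task in $\mathcal{W}_j$. Set $\alpha_{ijkt}=b_{ijkt}/B_j$. Capacity \eqref{total_bandwidth_constraint} gives $\sum\alpha_{ijkt}\le 1$. Within slot $t$, we serialize the tasks by assigning consecutive subintervals of lengths $\alpha_{ijkt}\Delta t$ in any fixed order, for instance by index. These subintervals fit in $[0,\Delta t]$ and never overlap, so at every instant at most one task holds the full bandwidth. Visibility is preserved because $\alpha_{ijkt}=0$ whenever $b_{ijkt}=0$, in particular for $t\notin\mathcal{T}_{ijk}$. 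Per slot, the data transmitted equals $\Delta\tau_{ijkt}$. Summing over slots up to $D_{ik}$ and using \eqref{remaining_data}--\eqref{completed_constrain}, the cumulative served data is at least $\tau_{ik}$. Equivalently, $\sum_t\alpha_{ijkt}\ge c_{ijk}$, so every task meets its deadline.

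\textbf{($\Leftarrow$)} Take a feasible time-sharing schedule with per-slot fractions $\alpha_{ijkt}$. Define the average bandwidth $b_{ijkt}=\alpha_{ijkt}B_j$, with $x_{ijk}=1$ for $(i,k)\in\mathcal{W}_j$. We then check the constraints. Nonnegativity and $b_{ijkt}\le x_{ijk}B_j$ follow from $\alpha\le1$. Capacity \eqref{total_bandwidth_constraint} follows from $\sum\alpha\le1$. Constraint \eqref{visibility_constraint} holds because the schedule serves tasks only inside $\mathcal{T}_{ijk}$. The per-slot transmitted data coincides with that of the schedule, so the remaining data under \eqref{remaining_data} reaches $0$ by $D_{ik}+1$, giving $y_{ik}=1$.

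The main obstacle is keeping the model precise rather than doing any hard computation. One issue is the $\max\{0,\cdot\}$ in \eqref{remaining_data}: overserving is harmless, but we should state that completion is equivalent to cumulative service being at least $\tau_{ik}$. The other is that slots are atomic in the bandwidth model but subdivided in the scheduling model. We resolve this by noting that feasibility in both models depends only on the per-slot aggregate service $\alpha_{ijkt}$ and never on intra-slot ordering. This holds because $\eta_{ij}$ is constant and visibility is defined at slot granularity. With this observation, the two feasible sets are in bijection through $b=\alpha B_j$, which proves the lemma.
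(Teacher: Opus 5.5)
Your proof is correct and follows essentially the same route as the paper. In both directions you use the per-slot correspondence $\alpha_{ijkt}=b_{ijkt}/B_j$, serializing a fractional allocation into non-overlapping full-bandwidth subintervals of length $\alpha_{ijkt}\Delta t$ within each slot, and conversely averaging a time-sharing schedule back to $b_{ijkt}=\alpha_{ijkt}B_j$, via $\sum\alpha_{ijkt}\le 1$ and $B_j\eta_{ij}\alpha_{ijkt}\Delta t=b_{ijkt}\eta_{ij}\Delta t$.

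Your extra bookkeeping (the $\max\{0,\cdot\}$ clamp and slot-level granularity) only makes explicit what the paper leaves implicit. Your closing ``bijection'' is really a correspondence between per-slot fraction profiles rather than between schedules, but nothing in the argument depends on it.
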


\begin{proof}
We prove both directions.

First, suppose there exists a feasible bandwidth allocation $\{b_{ijkt}\}$ for GS $j$. In each slot $t$, define $\alpha_{ijkt}=b_{ijkt}/B_j$. By the bandwidth-capacity constraint, $\sum_{(i,k)\in\mathcal{W}_j}\alpha_{ijkt}\le 1$. We can therefore replace the fractional bandwidth allocation in slot $t$ by a full-bandwidth service interval of length $\alpha_{ijkt}\Delta t$ for each task $(i,k)$. This transformation preserves the transmitted data, since
\begin{equation}
    B_j\eta_{ij}\alpha_{ijkt}\Delta t=b_{ijkt}\eta_{ij}\Delta t.
\end{equation}
It is performed within the same slot, so visibility and deadline constraints are preserved.

Conversely, suppose there exists a feasible preemptive full-bandwidth time-sharing schedule. If task $(i,k)$ is served by GS $j$ for a fraction $\alpha_{ijkt}$ of slot $t$, we construct the corresponding bandwidth allocation as $b_{ijkt}=\alpha_{ijkt}B_j$. Since the time-sharing schedule uses at most one full-bandwidth server in each slot, $\sum_{(i,k)}\alpha_{ijkt}\le 1$ and thus $\sum_{(i,k)}b_{ijkt}\le B_j$. The transmitted data in slot $t$ is $b_{ijkt}\eta_{ij}\Delta t$, matching the service provided by the schedule. Therefore, the constructed allocation satisfies the capacity, visibility, and completion constraints.

Therefore, feasible bandwidth allocation and feasible preemptive full-bandwidth scheduling are equivalent for a fixed GS assignment.
\end{proof}

Lemma~\ref{lem:bandwidth_scheduling_equivalence} transforms the bandwidth allocation problem into a preemptive single-processor scheduling problem with task-specific service windows. For a fixed GS assignment, each GS is viewed as a processor, and each downlink task is characterized by its processing requirement, deadline, and feasible service windows. We adopt EDF as a deadline-aware scheduling rule that prioritizes currently serviceable tasks with earlier deadlines. Although EDF is not generally optimal under arbitrary visibility constraints, it provides an efficient and deterministic feasibility-checking procedure, making it suitable for online GSaaS admission decisions.

\subsection{EDF-Based Schedulability Test}
\label{subsec:edf_check}

Algorithm~\ref{alg:edf_check} presents \textsc{EDF\_Check}, an EDF-based schedulability sufficient schedulability test. Given a GS \(j\) and a candidate task set \(W_j\), \textsc{EDF\_Check} determines whether all tasks in \(W_j\) can be completed by their deadlines using GS \(j\). It also returns the induced bandwidth allocation and the residual slack of the GS.

At each slot \(t\), \textsc{EDF\_Check} first identifies the set of currently serviceable unfinished tasks:
\begin{equation}
\label{eq:edf_active_set}
\mathcal{A}_j(t)
=
\left\{
(i,k)\in W_j:
t\in \mathcal{T}_{ijk},
\tau^{\mathrm{rem}}_{ik}(t)>0
\right\}.
\end{equation}

If \(\mathcal{A}_j(t)\neq\emptyset\), EDF selects the task with the earliest deadline:
\begin{equation}
\label{eq:edf_select}
(i^\star,k^\star)
=
\arg\min_{(i,k)\in \mathcal{A}_j(t)} D_{ik}.
\end{equation}

Ties are resolved by a fixed task-ID rule, which is independent of bids. GS \(j\) then serves the selected task using full bandwidth until the task is completed, the slot is exhausted, or the task becomes no longer serviceable. If the selected task completes before the end of the slot, the remaining capacity in that slot can be used to serve the next earliest-deadline task. This implements the full-bandwidth time-sharing interpretation.

The EDF schedule is accepted if every task completes before its deadline:
\begin{equation}
\tau^{\mathrm{rem}}_{ik}(D_{ik}+1)=0,\quad \forall (i,k)\in W_j.
\end{equation}
Otherwise, \textsc{EDF\_Check} returns false. The residual slack of GS \(j\) is defined as the total unused full-bandwidth slot equivalent:
\begin{equation}
\label{eq:slack_new}
\mathrm{Slack}_j
=
\sum_{t\in\mathcal{T}} I_j(t),
\end{equation}
where \(I_j(t)\) denotes the unused fraction of slot \(t\) after EDF scheduling. A smaller residual slack means that the task set fits more tightly on that GS, leaving other GSs less fragmented and more available for future tasks.

\begin{algorithm}[t]
\caption{\mbox{\textsc{EDF\_Check}: EDF-Based Schedulability Test}}
\label{alg:edf_check}
\KwIn{
Assigned task set \(W_j\) on GS \(j\);
feasible service windows
\(\{\mathcal{T}_{ijk}\}_{(i,k)\in W_j}\);
GS bandwidth \(B_j\)
}
\KwOut{
Feasibility indicator \(\mathrm{Flag}\);
bandwidth allocation \(\{b_{ijkt}\}\);
residual slack \(\mathrm{Slack}_j\)
}

Initialize
\(\tau^{\mathrm{rem}}_{ik}\leftarrow\tau_{ik}\),
\(b_{ijkt}\leftarrow0\), and \(I_j(t)\leftarrow0\),
\(\forall(i,k)\in W_j,\ t\in\mathcal{T}\)\;

\For{each slot \(t\in\mathcal{T}\)}{
    \(C\leftarrow1\)
    \tcp*[r]{Remaining full-bandwidth fraction}

    \While{\(C>0\)}{
        Form \(\mathcal{A}_j(t)\) according to
        Eq.~\eqref{eq:edf_active_set}
        \;

        \If{\(\mathcal{A}_j(t)=\emptyset\)}{
            \(I_j(t)\leftarrow C\)\;
            \textbf{break}\;
        }

        Select \((i^\star,k^\star)\) according to
        Eq.~\eqref{eq:edf_select},
        with ties resolved by task IDs
        \;

        Serve \((i^\star,k^\star)\) until it completes
        or \(C\) is exhausted; update \(b\),
        \(\tau^{\mathrm{rem}}\), and \(C\) according to
        Eqs.~\eqref{data_transmit} and~\eqref{remaining_data}
        \;
    }
}

Compute \(\mathrm{Slack}_j\) according to
Eq.~\eqref{eq:slack_new}
\;

\If{
    Eq.~\eqref{completed_constrain} is not satisfied
    for some \((i,k)\in W_j\)
}{
    \Return false, \(\emptyset\), \(\emptyset\)\;
}

\Return true, \(b_{ijkt}\), \(\mathrm{Slack}_j\)\;

\end{algorithm}

\begin{lemma}[Sufficient Feasibility of EDF-Based Schedulability Test]
\label{lem:edf_check_sufficient}
For a fixed GS \(j\) and a fixed assigned task set \(W_j\), if
\textsc{EDF\_Check}\((W_j,j)\) returns true, then there exists a feasible
preemptive full-bandwidth schedule that completes all tasks in \(W_j\)
before their deadlines.
\end{lemma}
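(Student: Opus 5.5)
The plan is constructive: the schedule produced inside \textsc{EDF\_Check} will itself serve as the feasible preemptive full-bandwidth schedule. We trace the algorithm slot by slot and record, for each task $(i,k)\in W_j$ and slot $t$, the fraction $\alpha_{ijkt}$ of slot $t$ during which it is served with full bandwidth $B_j$, so that $b_{ijkt}=\alpha_{ijkt}B_j$. We then check the defining properties of a feasible preemptive full-bandwidth schedule one at a time.

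\textbf{Capacity.} In each slot the counter $C$ starts at $1$ and is decreased by exactly the fraction $\alpha_{ijkt}$ given to the selected task. The loop stops once $C$ is exhausted or $\mathcal{A}_j(t)=\emptyset$. Hence $\sum_{(i,k)\in W_j}\alpha_{ijkt}=1-I_j(t)\le 1$. The service intervals inside the slot are placed consecutively, so at most one task holds the full bandwidth at any instant. This gives a valid single-server, time-shared, preemptive schedule.

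\textbf{Visibility and lifetime.} A task is served in slot $t$ only if it belongs to $\mathcal{A}_j(t)$ by Eq.~\eqref{eq:edf_active_set}, which requires $t\in\mathcal{T}_{ijk}$. So $\alpha_{ijkt}=0$ for all $t\notin\mathcal{T}_{ijk}$. Since $\mathcal{T}_{ijk}\subseteq[A_{ik},D_{ik}]$ by \eqref{def:fea-window}, no service occurs before arrival or after the deadline.

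\textbf{Completion.} The algorithm updates $\tau^{\mathrm{rem}}_{ik}$ using exactly Eqs.~\eqref{data_transmit} and~\eqref{remaining_data}, with transmitted data $B_j\eta_{ij}\alpha_{ijkt}\Delta t$. A return value of true means the check at the end passed, i.e., $\tau^{\mathrm{rem}}_{ik}(D_{ik}+1)=0$ for every $(i,k)\in W_j$. Because there is no service after $D_{ik}$, the cumulative full-bandwidth service inside $\mathcal{T}_{ijk}$ must reach $\tau_{ik}$, equivalently at least $c_{ijk}$ slot equivalents by \eqref{eq:processing_requirement}. By Lemma~\ref{lem:bandwidth_scheduling_equivalence}, the same recorded schedule also yields the bandwidth allocation $\{b_{ijkt}\}$ that the algorithm returns.

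\textbf{Main obstacle.} The hard part is not the logic but making sure the recorded schedule matches the algorithm's semantics. Two details need care. First, the phrase ``until the task becomes no longer serviceable'' is ambiguous with slot-granular windows, since membership in $\mathcal{T}_{ijk}$ is decided per slot. I would fix the convention that serviceability is evaluated per slot, so a task selected in slot $t$ remains eligible for the whole slot. Second, I would argue that the inner while loop terminates: each iteration either completes a task, which removes it from $\mathcal{A}_j(t)$, or sets $C=0$, so there are at most $|W_j|+1$ iterations per slot. This makes the schedule well defined. Nothing about the optimality of EDF is needed, because the lemma claims only sufficiency.
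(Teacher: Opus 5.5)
Your proposal is correct and follows essentially the same approach as the paper: both take the schedule that \textsc{EDF\_Check} itself builds as the witness. Both then verify the capacity, visibility (service only when $t\in\mathcal{T}_{ijk}$), and deadline-completion conditions from the algorithm's structure and its final check; your added per-slot serviceability convention and loop-termination argument fill in details the paper leaves implicit without changing the argument.
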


\begin{proof}
For a fixed GS \(j\), \textsc{EDF\_Check}\((W_j,j)\) explicitly constructs
a preemptive full-bandwidth schedule.

At each time slot, the algorithm only allocates service to tasks that
belong to their feasible service windows \(T_{ijk}\). Moreover, the
allocated full-bandwidth fractions satisfy the GS capacity constraint,
and the algorithm terminates successfully only when every task satisfies
\(\tau^{\mathrm{rem}}_{ik}(D_{ik}+1)=0\).

Therefore, the schedule generated by \textsc{EDF\_Check}\((W_j,j)\)
satisfies the visibility constraints, bandwidth constraints, and deadline
constraints. Hence, a feasible schedule exists.
\end{proof}

\subsection{Feasibility-Aware Primary-GS Assignment}
\label{subsec:feasible_assignment}

Algorithm~\ref{alg:feasible_assign_new} presents \textsc{Feasible\_Assign}, which constructs a feasibility-aware primary-GS assignment. The EDF-based test above determines whether a given set of tasks can be scheduled on a fixed GS. We next describe how STAR-GS assigns tasks to GSs. Following the primary-GS constraint in~\eqref{single_gs_constraint}, each admitted task is assigned to at most one GS. This design matches primary contact reservation and avoids cross-GS state synchronization and coordination overhead.\footnote{Multi-GS task splitting may further improve schedulability by combining disjoint contact opportunities, but supporting it would require corresponding extensions to the assignment, scheduling, and payment mechanisms.}

Given a candidate task set \(W\), \textsc{Feasible\_Assign} attempts to construct a feasible primary-GS assignment and the corresponding EDF schedules. It processes tasks in non-increasing order of reported bids:
\begin{equation}
\pi(W)=\big((i_1,k_1),(i_2,k_2),\ldots,(i_{|W|},k_{|W|})\big),
\end{equation}
where
\begin{equation}
\label{eq:bid_order}
\mathrm{bid}_{i_1k_1}\geq
\mathrm{bid}_{i_2k_2}\geq
\cdots\geq
\mathrm{bid}_{i_{|W|}k_{|W|}} .
\end{equation}
The use of bid order is revenue-aware: tasks with higher reported values are considered earlier, so scarce GS resources are first reserved for tasks that contribute more to the provider's potential revenue. Other fixed priority rules, such as earliest-deadline-first admission or throughput-first admission, can also be used as heuristics, but they do not directly align the admission order with the revenue objective. Importantly, ties are resolved by a fixed bid-independent rule, which is necessary for deterministic monotonicity in the auction mechanism.

Let \(W_j\) denote the set of tasks already assigned to GS \(j\). When considering task \((i,k)\), STAR-GS tentatively adds it to each candidate GS \(j\) and checks whether \(W_j\cup\{(i,k)\}\) remains schedulable using \textsc{EDF\_Check}. The acceptable candidate set is defined as
\begin{equation}
\label{eq:feasible_gs_set}
\begin{aligned}
\mathcal{F}_{ik}
=
\big\{
j\in&\mathcal{G}:\;
\mathcal{T}_{ijk}\neq\emptyset, \\
&
\textsc{EDF\_Check}(W_j\cup\{(i,k)\},j)=\mathrm{true}
\big\}.
\end{aligned}
\end{equation}
If \(\mathcal{F}_{ik}=\emptyset\), the candidate task set \(W\) is declared infeasible. Otherwise, STAR-GS selects a GS according to a best-fit rule:
\begin{equation}
\label{eq:best_fit_gs}
g(i,k)
=
\arg\min_{j\in\mathcal{F}_{ik}}
\widehat{\mathrm{Slack}}_j,
\end{equation}
where \(\widehat{\mathrm{Slack}}_j\) is the residual slack returned by \textsc{EDF\_Check} after tentatively adding \((i,k)\) to GS \(j\). This best-fit rule is inspired by classical bin-packing heuristics~\cite{doi:10.1137/0203025}: the task is placed on the feasible GS where it fits most tightly, so that less-loaded GSs remain available for later tasks with more restrictive visibility windows.

\begin{algorithm}[t]
\caption{\textsc{Feasible\_Assign}: Feasibility-Aware Primary-GS Assignment}
\label{alg:feasible_assign_new}

\KwIn{
Candidate task set \(W\);
GS set \(\mathcal{G}\);
feasible service windows \(\{\mathcal{T}_{ijk}\}\);
task bids \(\{\mathrm{bid}_{ik}\}\)
}
\KwOut{
Feasibility indicator \(\mathrm{flag}\);
GS assignment \(\{x_{ijk}\}\);
bandwidth allocation \(\{b_{ijkt}\}\);
residual slack
\(\{\mathrm{Slack}_j\}_{j\in\mathcal{G}}\)
}

Initialize
\(W_j\leftarrow\emptyset,\ \forall j\in\mathcal{G}\),
and \(x_{ijk}\leftarrow0,\ b_{ijkt}\leftarrow0\)\;

Sort \(W\) according to Eq.~\eqref{eq:bid_order}
to obtain \(\pi(W)\)
\;

\For{each \((i,k)\in\pi(W)\)}{
    Construct \(\mathcal{F}_{ik}\) according to
    Eq.~\eqref{eq:feasible_gs_set},
    retaining the tentative
    \(\widehat{\mathrm{Slack}}_j\)
    returned by \textsc{EDF\_Check}
    \;

    \If{\(\mathcal{F}_{ik}=\emptyset\)}{
        \Return false, \(\emptyset\),
        \(\emptyset\), \(\emptyset\)\;
    }

    Select \(g(i,k)\) according to
    Eq.~\eqref{eq:best_fit_gs},
    with ties resolved by the smallest GS ID
    \;

    \(W_{g(i,k)}
    \leftarrow W_{g(i,k)}\cup\{(i,k)\}\)\;

    \(x_{i\,g(i,k)\,k}\leftarrow1\)\;
}

\For{each \(j\in\mathcal{G}\)}{
    \((\mathrm{flag},b_{ijkt},\mathrm{Slack}_j)
    \leftarrow\textsc{EDF\_Check}(W_j,j)\)\;

    \If{\(\mathrm{flag}=\mathrm{false}\)}{
        \Return false, \(\emptyset\),
        \(\emptyset\), \(\emptyset\)\;
    }
}

\Return true,
\(x_{ijk}\),
\(b_{ijkt}\),
\(\{\mathrm{Slack}_j\}_{j\in\mathcal{G}}\)\;

\end{algorithm}

\subsection{Task Admission and Critical Payment Pricing}
\label{subsec:truthful_auction}

Algorithm~\ref{alg:truthful_admission_new} presents the task admission and critical-payment pricing procedure of STAR-GS. Each task \((i,k)\) has a private unit value \(v_{ik}\) and submits a unit bid \(\mathrm{bid}_{ik}\). The GSaaS provider also has a reserve unit price \(\rho\), which represents the minimum acceptable price for serving the task. STAR-GS first removes all tasks whose bids are below their reserve prices:
\begin{equation}
\label{eq:reserve_eligible_set}
\mathcal{R}^{+}
=
\{(i,k)\in\mathcal{R}:\mathrm{bid}_{ik}\geq \rho\}.
\end{equation}

The remaining tasks are processed in non-increasing bid order. For each task, STAR-GS tentatively inserts it into the current winner set and calls \textsc{Feasible\_Assign}. The task is admitted only if the enlarged winner set remains feasible.

After the winner set is determined, STAR-GS computes a critical payment for each winning task. The critical payment is the smallest bid at which the task would still win when all other bids are fixed. In implementation, STAR-GS constructs an ordered candidate bid list within this interval, consisting of the reserve price \(\rho\), the winner's submitted bid \(\mathrm{bid}_{ik}\), and other submitted bids in \([\rho,\mathrm{bid}_{ik}]\), and then performs binary search over this finite list. This threshold-payment rule is essential for \textbf{truthfulness} in single-parameter mechanisms.

\begin{algorithm}[t]
\caption{Truthful Admission and Critical Payment Pricing}
\label{alg:truthful_admission_new}

\KwIn{
Task set \(\mathcal{R}\);
task bids \(\{\mathrm{bid}_{ik}\}\);
reserve price \(\rho\)
}
\KwOut{
Winner set \(\mathcal{W}\);
critical payments \(\{p_{ik}\}\);
GS assignment \(x\);
bandwidth allocation \(b\)
}

Initialize
\(\mathcal{W}\leftarrow\emptyset\) and
\(p_{ik}\leftarrow0,\ 
\forall(i,k)\in\mathcal{R}\)\;

\tcp{Winner-selection phase}

Form \(\mathcal{R}^{+}\) according to
Eq.~\eqref{eq:reserve_eligible_set}
\;

Sort \(\mathcal{R}^{+}\) according to
Eq.~\eqref{eq:bid_order}
to obtain \(\pi(\mathcal{R}^{+})\)
\;

\For{each \((i,k)\in\pi(\mathcal{R}^{+})\)}{
    \(\mathcal{W}^{\mathrm{tmp}}
    \leftarrow
    \mathcal{W}\cup\{(i,k)\}\)\;

    \((\mathrm{flag},x,b,\mathrm{Slack})
    \leftarrow
    \textsc{Feasible\_Assign}
    (\mathcal{W}^{\mathrm{tmp}})\)\;

    \If{\(\mathrm{flag}=\mathrm{true}\)}{
        \(\mathcal{W}
        \leftarrow
        \mathcal{W}^{\mathrm{tmp}}\)\;
    }
}

\((\mathrm{flag},x,b,\mathrm{Slack})
\leftarrow
\textsc{Feasible\_Assign}(\mathcal{W})\)\;

\tcp{Critical-payment phase}

\For{each \((i,k)\in\mathcal{W}\)}{
    Build the candidate bid list within
    \([\rho,\mathrm{bid}_{ik}]\):
    
    \(\mathcal{B}_{ik}\leftarrow
    \{\rho\}\cup\{\mathrm{bid}_{ik}\}
    \cup
    \big\{
    \mathrm{bid}_{i'k'}:
    (i',k')\in\mathcal{R},
    (i',k')\neq(i,k),
    \rho\leq\mathrm{bid}_{i'k'}
    \leq\mathrm{bid}_{ik}
    \big\}\)\;

    Sort \(\mathcal{B}_{ik}\) in non-decreasing order as
    \((\beta_1,\beta_2,\ldots,
    \beta_{|\mathcal{B}_{ik}|})\)\;

    \(\ell\leftarrow1,\quad
    r\leftarrow|\mathcal{B}_{ik}|\)\;

    \While{\(\ell<r\)}{
        \(m\leftarrow
        \lfloor(\ell+r)/2\rfloor\)\;

        Replace \(\mathrm{bid}_{ik}\) by \(\beta_m\),
        keeping all other bids unchanged\;

        Rerun the winner-selection phase to obtain
        \(\mathcal{W}^{(\beta_m)}\)\;

        \If{\((i,k)\in\mathcal{W}^{(\beta_m)}\)}{
            \(r\leftarrow m\)\;
        }
        \Else{
            \(\ell\leftarrow m+1\)\;
        }
    }

    \(p_{ik}\leftarrow\beta_{\ell}\)\;
}

\Return
\(\mathcal{W}\),
\(\{p_{ik}\}_{(i,k)\in\mathcal{W}}\),
\(x\), \(b\)\;

\end{algorithm}

\subsection{Theoretical Properties}
\label{subsec:theoretical_properties}

We analyze STAR-GS under the single-parameter private-value model in auction theory~\cite{myerson1981optimal}, where each task \((i,k)\) has a private value \(v_{ik}\) and submits a bid \(\mathrm{bid}_{ik}\), while task demand, deadlines, visibility windows, GS bandwidths, and channel rates are public or verifiable. \textbf{Truthfulness} means that reporting \(\mathrm{bid}_{ik}=v_{ik}\) is a dominant strategy for every task.

\begin{lemma}[Deletion Monotonicity of \textsc{EDF\_Check}]
\label{lem:edf_deletion_new}
For a fixed GS \(j\), if \(\textsc{EDF\_Check}(V,j)=\mathrm{true}\), then for any subset \(U\subseteq V\), \(\textsc{EDF\_Check}(U,j)=\mathrm{true}\).
\end{lemma}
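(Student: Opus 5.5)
The plan is to compare two runs of \textsc{EDF\_Check} on GS \(j\): the run on \(V\), which succeeds by hypothesis, and the run on \(U\subseteq V\). EDF as implemented in Algorithm~\ref{alg:edf_check} is a preemptive \emph{fixed job-priority} rule. Jobs are totally ordered by \((D_{ik},\text{task ID})\), and this order does not depend on the rest of the set. So the relative priority of any two jobs of \(U\) is the same in both runs. At every instant, the processor serves the highest-priority job that is active in the sense of \eqref{eq:edf_active_set}. Hence the service received by a job \(X\) depends only on \(X\) itself and on the jobs of strictly higher priority. This lets me argue job by job in priority order: list \(U=\{X_1,\dots,X_n\}\) by decreasing priority and induct on \(n\). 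The same reasoning also covers the sets \(V_{\le X}=\{Y\in V: Y \text{ has priority at least that of } X\}\).

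The invariant I would try to establish is a cumulative-service dominance. For each job \(X\in U\) and each time \(t\), the remaining demand \(\tau^{\mathrm{rem}}_X(t)\) in the \(U\)-run is at most that in the \(V\)-run. Equivalently, completion times in the \(U\)-run are no later than in the \(V\)-run. Given this, every job of \(U\) meets \(\tau^{\mathrm{rem}}(D+1)=0\) in the \(U\)-run because it does so in the \(V\)-run, and \textsc{EDF\_Check}\((U,j)\) returns true. The induction step for \(X\) would run as follows. At any instant \(s\) in the \(V\)-run at which \(X\) is served, \(X\) is visible (\(s\in\mathcal{T}_{Xj}\)) and no higher-priority job of \(V\) is active. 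I then want to argue that in the \(U\)-run either \(X\) is also served at \(s\), or \(X\) is already finished, or \(X\) is preempted by a higher-priority job \(H\in U\). The first two cases are harmless. The third is where the work lies. Since \(H\) is inactive at \(s\) in the \(V\)-run but active in the \(U\)-run, the induction hypothesis says \(H\) has done at least as much work in the \(U\)-run; so \(H\) being active at \(s\) in the \(U\)-run, while finished in the \(V\)-run, would contradict the hypothesis. That leaves only the case where \(H\) is active at \(s\) in the \(U\)-run because \(s\notin\mathcal{T}_{Hj}\) was false in the \(V\)-run for a different reason, which cannot happen because visibility is the same in both runs. I would formalise this by a time-sweep within each slot, treating the fractional serving in Algorithm~\ref{alg:edf_check} as continuous time-sharing. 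I would also justify the tie-breaking by task IDs explicitly, since that is what makes the priority order total.

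The step I expect to be the main obstacle is the third case in the induction step. The inactivity of \(H\) in the \(V\)-run at instant \(s\) might be due to \(H\) having finished earlier there, while in the \(U\)-run \(H\) was shifted, that is, served at other times. With non-contiguous windows \(\mathcal{T}_{ijk}\), cumulative dominance of \(H\) does not immediately give pointwise dominance of \(X\)'s available capacity. I would first attempt to show cumulative dominance for the whole prefix set \(U_{\le X}\) simultaneously, using a greedy-exchange argument: removing the jobs of \(V\setminus U\) only frees processor time, and EDF reallocates freed time to the highest-priority active job. If that exchange argument fails because of the availability gaps, I would look for a small counterexample with two visibility windows. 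Such a counterexample would show that the lemma needs an extra hypothesis, such as a single contiguous window per task, or a modified EDF that freezes the \(V\)-schedule and simply deletes the removed jobs' service.
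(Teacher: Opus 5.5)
Your argument is correct, and it takes a different, more careful route than the paper. The paper argues in one line that deleting tasks does not reduce $B_j$, shrink any $\mathcal{T}_{ijk}$, or increase any demand, so $U$ is ``no harder to schedule'' than $V$. Read literally, this only shows that \emph{some} feasible schedule for $U$ exists, for example the $V$-schedule with the deleted tasks' service erased. The paper itself concedes that EDF is not optimal under arbitrary visibility windows, so existence alone does not show that the EDF run on $U$ succeeds.

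Your fixed job-priority induction supplies exactly that step. It rests on two facts: the order $(D_{ik},\text{task ID})$ does not depend on the task set, and availability is constant within a slot. The second fact means the within-slot fill of Algorithm~\ref{alg:edf_check} serves the highest-priority active task at every instant. This proves that the specific EDF run is sustainable under deletion, which is what Lemma~\ref{lem:winner_monotone_new} relies on. The paper's version buys only brevity.

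You should drop the hedging about the third case. Your own argument already closes it, and no two-window counterexample exists. Suppose $X$ is served at instant $s$ in the $V$-run and a higher-priority $H\in U$ has $s\in\mathcal{T}_{Hj}$. Then $H$ is finished at $s$ in the $V$-run. The induction hypothesis $\tau^{\mathrm{rem},U}_H(s)\le\tau^{\mathrm{rem},V}_H(s)=0$ then makes $H$ finished in the $U$-run too. When $H$ was served is irrelevant: activity at $s$ depends only on visibility at $s$, which is identical in both runs, and on whether $H$ is already done.

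So whenever $X$ is served in the $V$-run, it is served or already finished in the $U$-run. If $X$ is unfinished in the $U$-run on $[0,t)$, its $U$-service there therefore dominates its $V$-service. This gives $\tau^{\mathrm{rem},U}_X(t)\le\tau^{\mathrm{rem},V}_X(t)$ for all $t$, with no pointwise capacity comparison, contiguity assumption, or frozen-schedule variant of EDF.
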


\begin{proof}
Removing tasks does not reduce the bandwidth of GS \(j\), shorten the feasible service window of any remaining task, or increase its demand. Hence, any subset \(U\subseteq V\) is no harder to schedule than \(V\). Since \(V\) is EDF-feasible, \textsc{EDF\_Check} also completes all tasks in \(U\) by their deadlines, and therefore
\(\textsc{EDF\_Check}(U,j)=\mathrm{true}\).
\end{proof}

\begin{lemma}[Monotonicity of Winner Selection]
\label{lem:winner_monotone_new}
The winner-selection rule of STAR-GS is monotone in each task's bid. Fixing all other bids, for any task $e=(i,k)$, the following two properties hold: (i) if $e$ wins with bid $\mathrm{bid}_{ik}$, then it also wins with any higher bid $\mathrm{bid}'_{ik}>\mathrm{bid}_{ik}$; (ii) if $e$ loses with bid $\mathrm{bid}_{ik}$, then it also loses with any lower bid $\mathrm{bid}'_{ik}<\mathrm{bid}_{ik}$.
\end{lemma}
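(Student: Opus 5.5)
Part (ii) is the contrapositive of part (i). If $e$ lost with $\mathrm{bid}_{ik}$ but won with some lower $\mathrm{bid}'_{ik}$, then applying (i) to $\mathrm{bid}'_{ik}$ and the higher bid $\mathrm{bid}_{ik}$ would force $e$ to win with $\mathrm{bid}_{ik}$, a contradiction. So I would prove only (i).

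Fix all other bids. Suppose $e$ wins with $\mathrm{bid}_{ik}$ and consider $\mathrm{bid}'_{ik}>\mathrm{bid}_{ik}$. Reserve eligibility is preserved, since $\mathrm{bid}'_{ik}>\mathrm{bid}_{ik}\ge\rho$.

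Let $P$ be the set of tasks preceding $e$ in $\pi(\mathcal{R}^{+})$ under the original bid, and $P'$ the set preceding $e$ under the raised bid. Ties are broken by a fixed bid-independent rule, and raising $e$'s bid only moves $e$ earlier. Hence $P'\subseteq P$, $P'$ is a prefix of the original order, and the relative order of all tasks other than $e$ is unchanged. Before $e$ is examined, the winner-selection loop's decisions on $P'$ do not involve $e$. They therefore coincide with the first $|P'|$ decisions of the original run. So the winner set accumulated just before $e$ in the new run is $\mathcal{W}'_{<e}=\mathcal{W}_{<e}\cap P'$, where $\mathcal{W}_{<e}$ is the accumulated winner set just before $e$ in the original run. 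Since admission of $e$ is decided once, at its own step, it suffices to show that $\textsc{Feasible\_Assign}(\mathcal{W}'_{<e}\cup\{e\})$ returns true.

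This is the step I expect to be the main obstacle. \textsc{Feasible\_Assign} is a greedy best-fit procedure, not an exact feasibility test, so I cannot invoke any clean subset-closure property of it directly. I would compare its two executions step by step.

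In the original run, $\textsc{Feasible\_Assign}(\mathcal{W}_{<e}\cup\{e\})$ first processes $\mathcal{W}'_{<e}$ in bid order, then the tasks of $\mathcal{W}_{<e}\setminus P'$, and finally $e$. The last group comes last because every task in it has bid at least $\mathrm{bid}_{ik}$, with ties ranked ahead of $e$, since those tasks precede $e$.

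In the new run, $\textsc{Feasible\_Assign}(\mathcal{W}'_{<e}\cup\{e\})$ processes the same tasks $\mathcal{W}'_{<e}$ in the same order. Best-fit choices and tie-breaking by GS ID are deterministic, so it produces exactly the same per-GS sets. Hence, when $e$ is reached in the new run, each per-GS set $W'_j$ is a subset of the set $W_j$ that $e$ faced in the original run.

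In the original run, some GS $j^\ast$ satisfied $\textsc{EDF\_Check}(W_{j^\ast}\cup\{e\},j^\ast)=\mathrm{true}$. By Lemma~\ref{lem:edf_deletion_new}, $\textsc{EDF\_Check}(W'_{j^\ast}\cup\{e\},j^\ast)=\mathrm{true}$, and $\mathcal{T}_{i j^\ast k}\neq\emptyset$ does not depend on bids. So $\mathcal{F}_{ik}\neq\emptyset$ and $e$ is placed on some feasible GS. Because $e$ is the last task in this call, the only remaining step is the final per-GS \textsc{EDF\_Check} loop. The sets it checks are exactly sets that were just verified as true: the chosen GS's set including $e$, and the other GSs' sets, which passed when their last task was added. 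Thus the call returns true, $e$ is admitted with $\mathrm{bid}'_{ik}$, and $e$ stays in $\mathcal{W}$ thereafter.

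I would also note in passing why the argument can avoid the anomalies of greedy packing. The new run's state is literally a prefix of the old run's execution, so no reassignment effects from later tasks ever need to be controlled. Only Lemma~\ref{lem:edf_deletion_new} and determinism of the tie-breaking rules are used.
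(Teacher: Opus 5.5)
Your proof is correct and follows the same route as the paper: raising $\mathrm{bid}_{ik}$ moves $e$ earlier, so the set admitted before $e$ shrinks, Lemma~\ref{lem:edf_deletion_new} preserves feasibility, and (ii) follows from (i) by contradiction. Your step-by-step comparison of the two \textsc{Feasible\_Assign} executions is more careful than the paper. You show that the retained winners form an identically processed prefix, so each per-GS set that $e$ faces can only shrink, whereas the paper applies Lemma~\ref{lem:edf_deletion_new} directly to the admitted set without addressing that the greedy best-fit assignment is not itself subset-closed.
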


\begin{proof}
Fix all bids except that of task $e=(i,k)$. We first prove the winner side. Suppose $e$ wins with bid $\mathrm{bid}_{ik}$. If its bid is increased to $\mathrm{bid}'_{ik}>\mathrm{bid}_{ik}$, then $e$ can only move earlier in the non-increasing bid order, while the relative order among all other tasks remains unchanged. When $e$ is considered under the higher bid, the set of tasks admitted before $e$ is no larger than the corresponding set under the lower bid. Since $e$ was feasible together with the previously admitted tasks under the lower bid, removing some previously admitted tasks cannot destroy feasibility by Lemma~\ref{lem:edf_deletion_new}. Hence, $e$ is still admitted under the higher bid.

We next prove the loser side. Suppose $e$ loses with bid $\mathrm{bid}_{ik}$, and consider any lower bid $\mathrm{bid}'_{ik}<\mathrm{bid}_{ik}$. If $e$ were to win under the lower bid, then by the winner-side monotonicity just proved, increasing its bid from $\mathrm{bid}'_{ik}$ back to $\mathrm{bid}_{ik}$ would imply that $e$ also wins with bid $\mathrm{bid}_{ik}$. This contradicts the assumption that $e$ loses with bid $\mathrm{bid}_{ik}$. Therefore, $e$ must also lose under any lower bid.
\end{proof}

\begin{lemma}[Critical Price as Threshold Payment]
\label{lem:critical_payment_new}
For any winning task $(i,k)$ produced by STAR-GS, the payment $p_{ik}$ is its critical price. That is, fixing all other bids, task $(i,k)$ wins if and only if $\mathrm{bid}_{ik}\ge p_{ik}$.
\end{lemma}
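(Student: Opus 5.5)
The plan is to combine the monotonicity of Lemma~\ref{lem:winner_monotone_new} with the correctness of the binary search over the finite list $\mathcal{B}_{ik}$. Fix all other bids and define the win indicator $w(\beta)=\mathbf{1}[(i,k)\in\mathcal{W}^{(\beta)}]$ as a function of task $(i,k)$'s bid $\beta$. By Lemma~\ref{lem:winner_monotone_new}, $w$ is non-decreasing in $\beta$. Also, any bid $\beta<\rho$ removes the task from $\mathcal{R}^{+}$, so $w(\beta)=0$ there. Hence the winning bids form an up-closed set $\{\beta: w(\beta)=1\}$. The first goal is to show that this set is exactly $[\beta^\ast,\infty)$ for some $\beta^\ast\in\mathcal{B}_{ik}$. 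The second goal is to show that the binary search returns this $\beta^\ast$.

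For the first goal, I will use that the winner-selection phase depends on $\beta$ only through (a) whether $\beta\ge\rho$ and (b) the position of $(i,k)$ in the sorted order $\pi(\mathcal{R}^{+})$. All feasibility checks depend only on public data and the order. The position is piecewise constant in $\beta$ and can change only when $\beta$ crosses some other submitted bid $\mathrm{bid}_{i'k'}$ or the reserve $\rho$. Ties are broken by a fixed bid-independent rule. So on each interval between consecutive points of $\{\rho\}\cup\{\mathrm{bid}_{i'k'}\}$, the outcome is constant, including at the left endpoint under the convention that equal bids are ordered by task ID. Combined with monotonicity, this shows the threshold $\inf\{\beta:w(\beta)=1\}$ is attained and equals one of these breakpoints. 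Since the task wins at its actual bid, the threshold lies in $[\rho,\mathrm{bid}_{ik}]$, and therefore lies in $\mathcal{B}_{ik}$. The one check needed here is that the tie convention makes each constancy interval closed on the left, so that the infimum is achieved rather than approached.

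For the second goal, the search runs on $\beta_1<\dots<\beta_{|\mathcal{B}_{ik}|}=\mathrm{bid}_{ik}$, and $w(\beta_{|\mathcal{B}_{ik}|})=1$. Because $w$ restricted to this list is a monotone $0/1$ sequence ending in $1$, the standard invariant holds: $w(\beta_r)=1$, and every index below $\ell$ has $w=0$. The loop therefore terminates with $\beta_\ell$ equal to the smallest list element with $w=1$, which is $\beta^\ast$. Setting $p_{ik}=\beta^\ast$ then gives the equivalence in the lemma. If $\mathrm{bid}_{ik}\ge p_{ik}$, the task wins by part (i) of Lemma~\ref{lem:winner_monotone_new}. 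If $\mathrm{bid}_{ik}<p_{ik}$, the task loses by minimality of the threshold.

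The main obstacle is the first goal: showing that the true infimum is a point of $\mathcal{B}_{ik}$. This is not automatic if tie-breaking favors other tasks. In that case, bidding exactly $\mathrm{bid}_{i'k'}$ could lose while any bid slightly above it wins, and the threshold would be an open endpoint. I would first handle this by treating the ID-based tie rule as a lexicographic bid $(\beta,\text{ID})$, so the critical value is understood in that extended sense. Failing that, I would state the threshold property with the convention that critical bids at ties are resolved by the same fixed rule.
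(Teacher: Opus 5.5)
Your skeleton is the same as the paper's: Lemma~\ref{lem:winner_monotone_new} makes the winning bids up-closed, and bisection over $\mathcal{B}_{ik}$ returns the least winning list element. Two of your steps are correct and more careful than the paper, which only asserts that bisection finds ``the'' threshold. First, $\beta^\ast=\inf\{\beta:w(\beta)=1\}$ is a breakpoint in $(\{\rho\}\cup\{\mathrm{bid}_{i'k'}\})\cap[\rho,\mathrm{bid}_{ik}]$, so it lies in $\mathcal{B}_{ik}$. Second, your bisection invariant is right.

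The genuine gap is the one you flag, and it cannot be closed, because the statement is false in that case. Your claim that the outcome is constant ``including at the left endpoint'' is guaranteed only if the fixed ID rule ranks $(i,k)$ ahead of every task bidding exactly $\beta^\ast$. If instead the rule ranks all such tasks ahead of $(i,k)$, the order at $\beta^\ast$ equals the order just below $\beta^\ast$. Then $w(\beta^\ast)=0$ and the winning set is $(\beta^\ast,\infty)$. The bisection probes $(i,k)$ at $\beta_m$ with its actual tie priority, so it returns the next list element, strictly above $\beta^\ast$.

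Concretely, take one GS, one slot, and $\rho=1$. Let tasks $e,f$ both be visible in that slot, with $A=D=1$ and normalized processing requirement $1$ each, so at most one completes. Set $\mathrm{bid}_f=5$, $\mathrm{bid}_e=10$, and let the ID rule put $f$ before $e$. Then $e$ wins iff its bid exceeds $5$. With $\mathcal{B}_e=\{1,5,10\}$ the probe at $5$ loses, so $p_e=10$, yet $e$ also wins with bid $7<p_e$. Worse, bidding $7$ gives $\mathcal{B}_e=\{1,5,7\}$ and payment $7$. A task with $v_e=10$ therefore gains $3\tau_e$ by shading, so DSIC in Theorem~\ref{thm:mechanism_properties_new} fails as well.

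Neither of your fallbacks repairs this. Reading bids lexicographically as $(\beta,\mathrm{ID})$ changes what ``critical value'' means, but not what Algorithm~\ref{alg:truthful_admission_new} outputs (still $10$ above). A tie ``convention'' does not make bids in $(\beta^\ast,p_{ik})$ lose.

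A correct result needs a change to the mechanism and a slightly weaker statement. In each probe, let $(i,k)$ win all ties at $\beta_m$, i.e., evaluate the right limit $w(\beta_m^{+})$. This indicator is monotone, equals $1$ at $\beta_{|\mathcal{B}_{ik}|}=\mathrm{bid}_{ik}$, and equals $1$ exactly on list elements $\ge\beta^\ast$. Your two-goal argument then goes through verbatim and yields $p_{ik}=\beta^\ast$. The lemma should then read: ``$(i,k)$ wins for every bid $>p_{ik}$ and loses for every bid $<p_{ik}$.'' What happens at exactly $p_{ik}$ is immaterial, and this weaker form is all the single-parameter characterization needs for truthfulness.
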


\begin{proof}
By Lemma~\ref{lem:winner_monotone_new}, the winner-selection rule is monotone in each task's bid. Therefore, by the characterization of single-parameter mechanisms~\cite{nisan2007algorithmic}, there exists a critical threshold price for each task. Since STAR-GS searches the ordered finite bid list by bisection and returns the smallest bid at which the task remains a winner, the returned payment $p_{ik}$ is exactly this critical price.
\end{proof}
\begin{theorem}
\label{thm:mechanism_properties_new}
Under the single-parameter private-value setting, STAR-GS is dominant-strategy incentive compatible, individually rational for task owners, satisfies the provider-side reserve-price constraint, and runs in polynomial time.
\end{theorem}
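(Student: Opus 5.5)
The plan is to establish the four properties separately, with Lemmas~\ref{lem:winner_monotone_new} and~\ref{lem:critical_payment_new} carrying most of the weight.

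\textbf{Truthfulness.} I would use the standard threshold argument, with utility per unit of $\tau_{ik}$; since $\tau_{ik}$ is public and fixed, it is only a positive scale factor. Fix all other bids and let $p^\ast$ be the critical price of $(i,k)$ from Lemma~\ref{lem:critical_payment_new}. A winner pays $p^\ast$, which depends only on the other bids and not on the bid of $(i,k)$. Each winner is also actually completed ($y_{ik}=1$): the final \textsc{Feasible\_Assign} call succeeds, and Lemmas~\ref{lem:edf_check_sufficient} and~\ref{lem:bandwidth_scheduling_equivalence} then give a feasible bandwidth allocation meeting every deadline. Hence the utility of $(i,k)$ is $(v_{ik}-p^\ast)\tau_{ik}$ if $\mathrm{bid}_{ik}\ge p^\ast$, and $0$ otherwise. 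Two cases follow. If $v_{ik}\ge p^\ast$, truthful bidding wins with nonnegative utility, and any other bid yields either the same utility or $0$. If $v_{ik}<p^\ast$, truthful bidding loses with utility $0$, while any winning misreport yields negative utility. When $v_{ik}<\rho$, the reserve filter makes truthful bidding lose, which is consistent with $p^\ast\ge\rho$.

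\textbf{Individual rationality and reserve-price protection.} For individual rationality, a truthful winner has $p_{ik}\le \mathrm{bid}_{ik}=v_{ik}$, because the binary search runs over $\mathcal{B}_{ik}\subseteq[\rho,\mathrm{bid}_{ik}]$ and $\mathrm{bid}_{ik}\in\mathcal{B}_{ik}$ is a winning bid. A truthful loser pays nothing. For reserve-price protection, every returned $p_{ik}$ equals some $\beta_\ell\ge\rho$, so constraint~\eqref{price_constraint} holds.

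\textbf{Main obstacle: exactness of the critical price.} The delicate step is checking that the finite list $\mathcal{B}_{ik}$ contains the true threshold, so that Lemma~\ref{lem:critical_payment_new} really holds. The winner set can change only when the bid of $(i,k)$ crosses the bid of some other task $\mathrm{bid}_{i'k'}$ (the order changes) or crosses $\rho$ (eligibility changes). Between consecutive points of $\mathcal{B}_{ik}$ the processing order, and hence the outcome, is constant. The winning region is therefore an upward-closed set whose infimum lies in $\mathcal{B}_{ik}$.

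At a breakpoint, ties are resolved by the fixed ID rule, so the breakpoint bid either attains the better position or not. If it does not, the infimum is not attained and the ``smallest winning list element'' may sit one breakpoint above the true threshold. I would handle this by noting that ties and equality are resolved consistently: the algorithm's payment is the smallest bid in $\mathcal{B}_{ik}$ at which $(i,k)$ wins, and the winning set restricted to $[\rho,\mathrm{bid}_{ik}]$ is the interval from that point upward, up to the tie convention. Binary search correctness then follows from monotonicity (Lemma~\ref{lem:winner_monotone_new}) applied to the sorted list.

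\textbf{Polynomial time.} I would bound the nested calls. \textsc{EDF\_Check} runs in $O(T\cdot|W_j|\log|W_j|)$, since each slot has at most $|W_j|+1$ service events. \textsc{Feasible\_Assign} makes $O(|W|G)$ such calls, and winner selection makes $|\mathcal{R}|$ calls to \textsc{Feasible\_Assign}. Pricing reruns winner selection $O(\log|\mathcal{R}|)$ times for each of at most $|\mathcal{R}|$ winners. The total is polynomial in $|\mathcal{R}|$, $G$, and $T$.
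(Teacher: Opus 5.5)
Your route matches the paper's. Individual rationality and reserve protection follow from $\rho\le p_{ik}\le\mathrm{bid}_{ik}$. DSIC follows from Lemma~\ref{lem:winner_monotone_new}, Lemma~\ref{lem:critical_payment_new}, and the single-parameter characterization. Polynomial time follows from counting nested calls. Those parts are fine, and your completion remark and explicit running-time count are more careful than the paper.

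\textbf{The gap.} It is the step you flag as the main obstacle. Your resolution (``ties and equality are resolved consistently \ldots\ up to the tie convention'') is not valid, and the step cannot be closed for the algorithm as written.

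Suppose the threshold $\theta=\inf\{b:(i,k)\text{ wins}\}$ equals some $\mathrm{bid}_{i'k'}$, and $(i',k')$ precedes $(i,k)$ in the fixed ID tie-break. Then the winning region is the open ray $(\theta,\infty)$. The binary search returns the next element of $\mathcal{B}_{ik}$ above $\theta$. That element is $\mathrm{bid}_{ik}$ itself whenever no other bid lies in $(\theta,\mathrm{bid}_{ik})$. Three consequences follow:
\begin{itemize}
\item bids in $(\theta,p_{ik})$ also win;
\item the claim ``wins iff $\mathrm{bid}_{ik}\ge p_{ik}$'' is false;
\item your premise that a winner pays an amount depending only on the others' bids fails.
\end{itemize}

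\textbf{Counterexample.} Take one GS and one slot, and tasks $e,f$ each requiring $0.6$ of the slot at full bandwidth. Set $\rho=50$, $\mathrm{bid}_f=60$, $v_e=80$, and let $f$ come first in the tie-break. Then $e$ wins exactly when its bid exceeds $60$.
\begin{itemize}
\item Bidding $80$: $\mathcal{B}_e=\{50,60,80\}$, and the test at $60$ loses the tie. So $p_e=80$ and the utility is $0$.
\item Bidding $61$: $e$ wins with $p_e=61$ and utility $19\tau_e>0$.
\end{itemize}
So truthful bidding is not dominant. The paper's proof does not handle this either. It just cites Lemma~\ref{lem:critical_payment_new}, whose proof asserts that the smallest winning list element is the critical price.

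\textbf{The fix.} Charge $\theta$ itself. You observed that the outcome is constant between consecutive elements of $\mathcal{B}_{ik}$. Hence, if $\beta_\ell$ is the returned element and $\ell\ge 2$, then $\theta\in\{\beta_{\ell-1},\beta_\ell\}$. One extra rerun at any bid strictly inside $(\beta_{\ell-1},\beta_\ell)$ decides between them: charge $\beta_{\ell-1}$ if that bid wins, and $\beta_\ell$ otherwise. Alternatively, break ties in favor of $(i,k)$ during the pricing reruns only.

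With $p_{ik}=\theta$ your case analysis goes through, since at $v_{ik}=\theta$ winning and losing both give zero utility. Individual rationality and reserve protection survive because $\rho\le\theta\le\mathrm{bid}_{ik}$. Lemma~\ref{lem:critical_payment_new} should then be stated as $p_{ik}=\inf\{b:(i,k)\text{ wins}\}$ rather than in the ``iff $\ge$'' form.
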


\begin{proof}
We first prove the provider-side reserve-price constraint. STAR-GS removes all tasks with $\mathrm{bid}_{ik}<\rho$ before admission. For every winning task, the critical payment satisfies $p_{ik}\ge \rho$. Therefore, every admitted and completed task is charged no less than the provider's reserve unit price:
\begin{equation}
p_{ik}\ge \rho,\quad \forall (i,k)\text{ with }y_{ik}=1.
\end{equation}
Equivalently, the provider-side utility for each admitted and completed task satisfies $u^{GS}_{ik}=(p_{ik}-\rho)\tau_{ik}\ge 0$. Thus, STAR-GS enforces provider-side reserve-price protection.

We next prove individual rationality. Under truthful bidding, $\mathrm{bid}_{ik}=v_{ik}$. For every winning task, the critical payment is no larger than its submitted bid, i.e., $p_{ik}\le \mathrm{bid}_{ik}=v_{ik}$. Therefore, $u^{task}_{ik}=(v_{ik}-p_{ik})\tau_{ik}\ge 0$. Losing tasks receive no service and make no payment, so their utility is zero. Hence, STAR-GS is individually rational for task owners.

We then prove dominant-strategy incentive compatibility. By Lemma~\ref{lem:winner_monotone_new}, the winner-selection rule is monotone in each task's bid. By Lemma~\ref{lem:critical_payment_new}, every winning task is charged its critical payment. By the characterization of single-parameter mechanisms~\cite{nisan2007algorithmic}, a monotone allocation rule combined with critical payments is dominant-strategy incentive compatible. Therefore, truthful bidding is a dominant strategy for every task owner.

Finally, STAR-GS runs in polynomial time because it performs polynomially many EDF-based schedulability checks, feasibility-aware assignments, and critical-payment searches over a finite bid set. Therefore, all stated properties hold.
\end{proof}

\section{Performance Evaluation}
\label{sec:simulation}
\subsection{Simulation Setup}

We evaluate STAR-GS using a discrete-time simulator driven by satellite--ground station access windows generated by Ansys Systems Tool Kit (STK), an industry-standard tool for high-fidelity LEO satellite mission simulation and access analysis~\cite{agi_stk}. We consider a Walker-style LEO constellation and a geographically distributed GSaaS network, with the default simulation parameters summarized in Table~\ref{tab:simulation-parameters}. Each experiment varies one parameter at a time and reports the average of $10$ runs with different random seeds.

\begin{table}[t]
\centering
\caption{Simulation Parameters}
\label{tab:simulation-parameters}
\footnotesize
\setlength{\tabcolsep}{1.5pt}
\renewcommand{\arraystretch}{0.88}
\begin{tabular}{ll}
\hline
Parameter & Value \\
\hline
Constellation type  \cite{mathworks_walker}  & Walker-style LEO constellation\\
Orbital planes \cite{mathworks_walker} & $3$ \\
Satellites per plane \cite{mathworks_walker} & $8$ \\
Total satellites \cite{mathworks_walker} & $24$ \\
Number of ground stations \cite{Duncan} & $6$ \\
Planning horizon \cite{363d0a634b234a96a2f672e30e105fe3} & $24$ hours \\
Slot duration \cite{363d0a634b234a96a2f672e30e105fe3} & $60$ s  \\
Number of slots \cite{363d0a634b234a96a2f672e30e105fe3} & $1440$ \\
Ground station bandwidth \cite{10901122} & $80$ Mbps \\
Task demand \cite{10858567} & $[5,20]$ GB \\
Unit bid \cite{4927011} & $[50,100]$ \\
Deadline slack \cite{10901122} & $[30,90]$ slots \\
Default number of tasks & $300$ \\
\hline
\end{tabular}
\end{table}

\subsection{Baselines}

We compare STAR-GS with the following baselines:

\begin{itemize}
    \item \textbf{Time-Space-Network Mixed-Integer Linear Programming (TSN-MILP).} TSN-MILP is adapted from the time-space-network formulation in Lee et al.~\cite{aerospace11010083}. We implement it with the same bid-price revenue maximization objective as Problem~\eqref{eq:opt_problem}, i.e., maximizing $\sum_{(i,k)} \mathrm{bid}_{ik}\tau_{ik}y_{ik}$ under the same scheduling constraints. It serves as an optimization benchmark for small-scale experiments.

    \item \textbf{Communication-Resource Throughput Greedy (CR-Throughput).} CR-Throughput is a communication-resource greedy baseline adapted from Ronen and Ben-Moshe~\cite{s25247538}. At each time slot, each ground station selects the currently serviceable task with the largest immediately deliverable data volume.

    \item \textbf{Earliest Deadline First (EDF).} EDF is a bid-agnostic scheduling baseline that processes tasks according to the earliest-deadline-first rule. Unlike STAR-GS, it does not use bids for admission decisions and prioritizes tasks only by deadline urgency.
    
    \item \textbf{Highest-Bid-First with Naive Assignment (HBF-NA).} 
    HBF-NA is a bid-greedy baseline inspired by auction-based resource allocation~\cite{10887300}. It admits tasks in descending bid order and assigns each task to the first serviceable ground station, without best-fit or slack-based feasibility comparison.
    
    \item \textbf{Satellite-Centric Best-Visibility (SC-BV).} 
    SC-BV is a satellite-centric baseline inspired by LEO ground-station selection studies~\cite{Duncan}. Each satellite selects the ground station with the longest total visibility duration, through which all its tasks are scheduled.
\end{itemize}

\subsection{Overall Performance}

\begin{figure*}[t]
    \centering

    \subfloat[Task volume]{
        \includegraphics[width=0.29\textwidth]{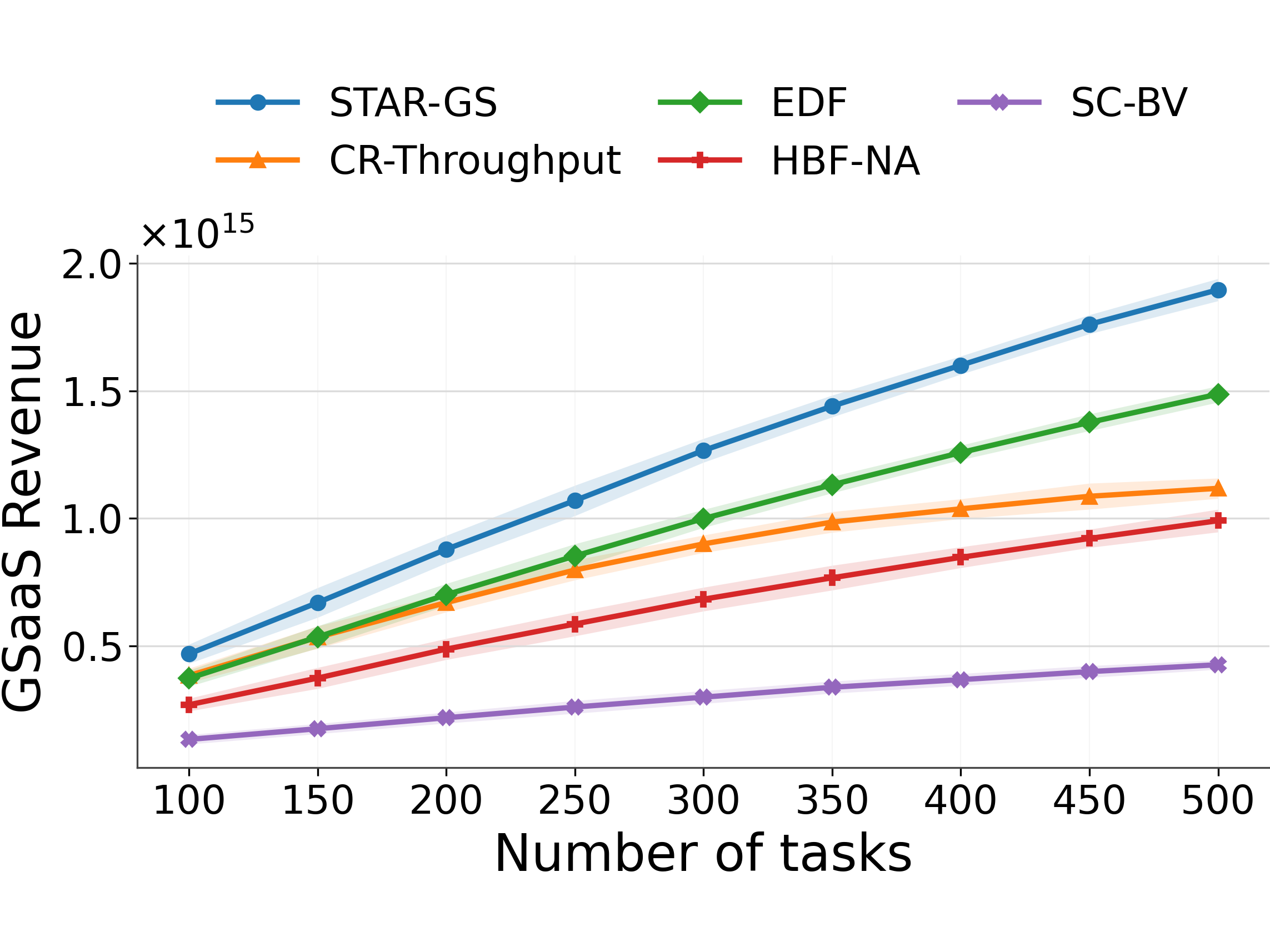}
    }
    \hfill
    \subfloat[GS bandwidth]{
        \includegraphics[width=0.29\textwidth]{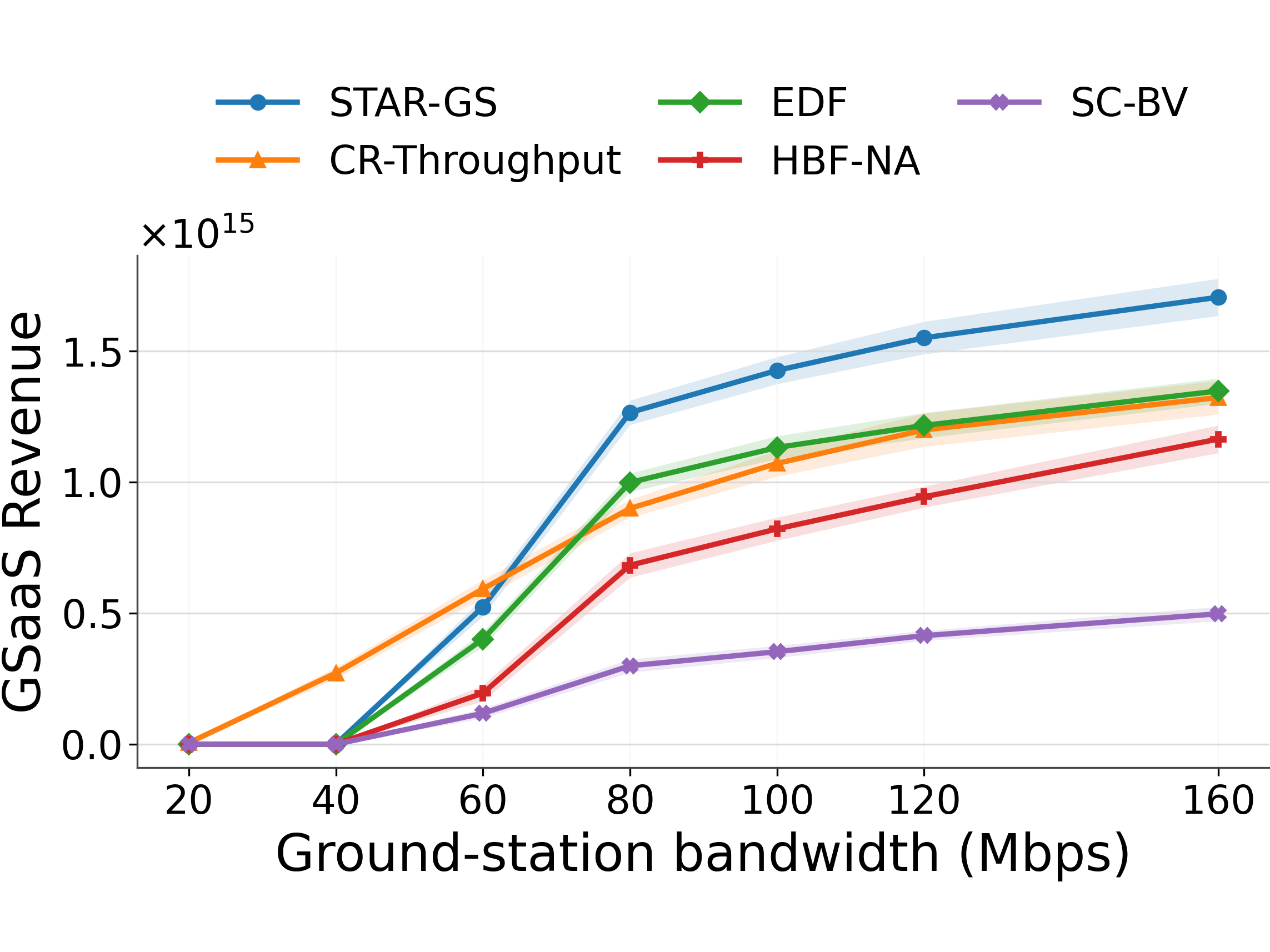}
    }
    \hfill
    \subfloat[Deadline slack]{
        \includegraphics[width=0.29\textwidth]{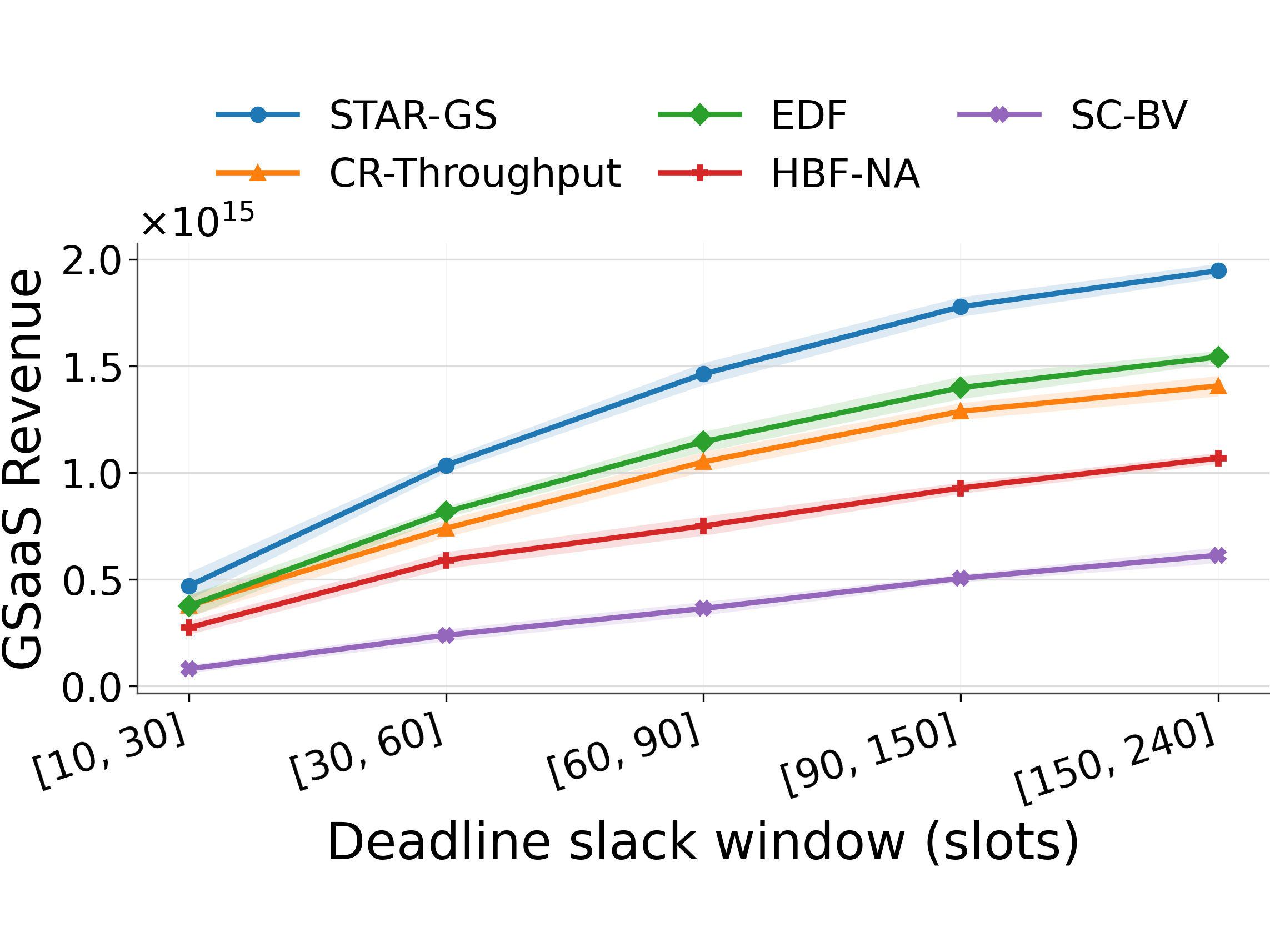}
    }

    \subfloat[Task demand]{
        \includegraphics[width=0.29\textwidth]{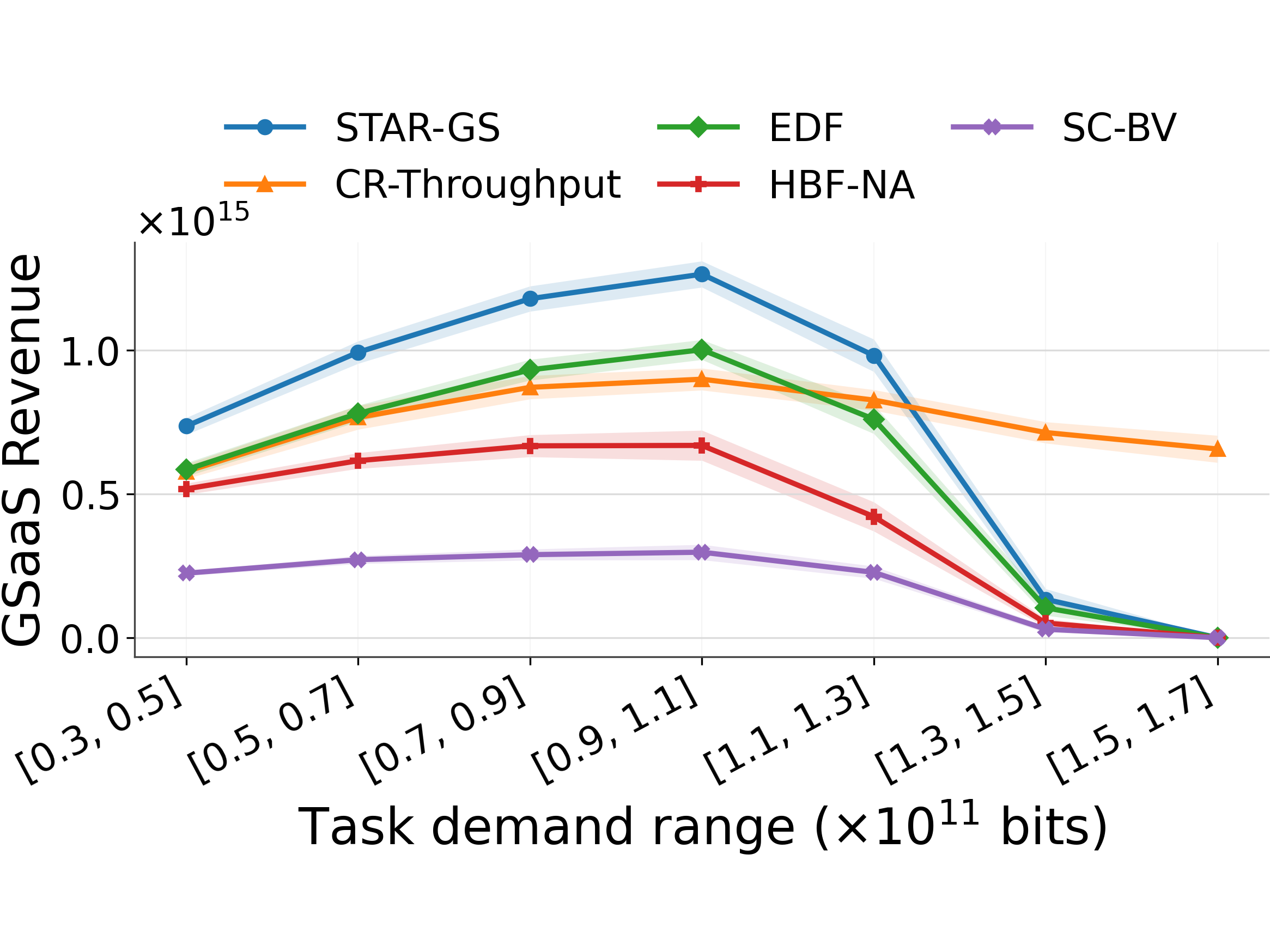}
    }
    \hfill
    \subfloat[Bid dispersion]{
        \includegraphics[width=0.29\textwidth]{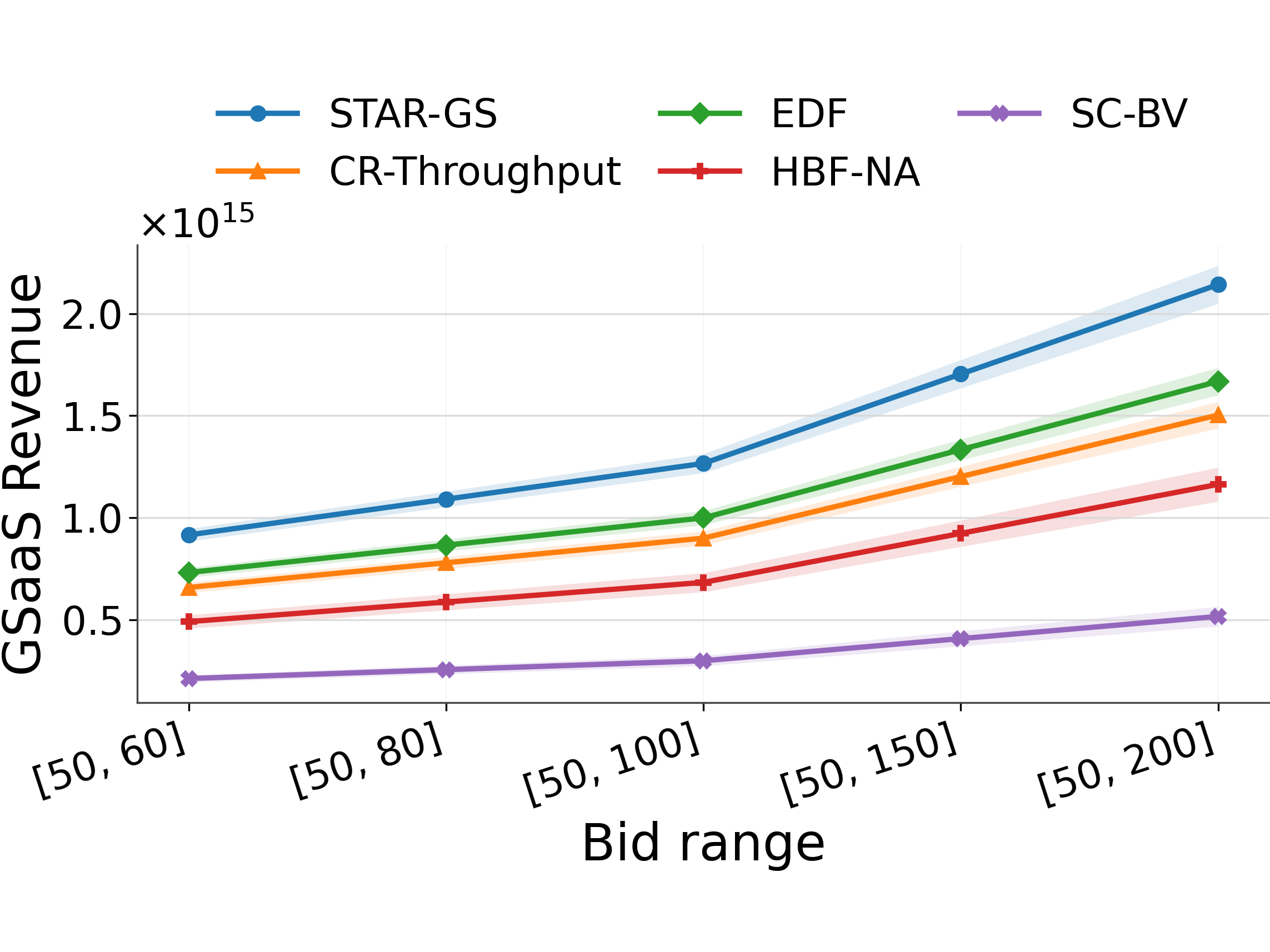}
    }
    \hfill
    \subfloat[Acceptance ratio]{
        \includegraphics[width=0.29\textwidth]{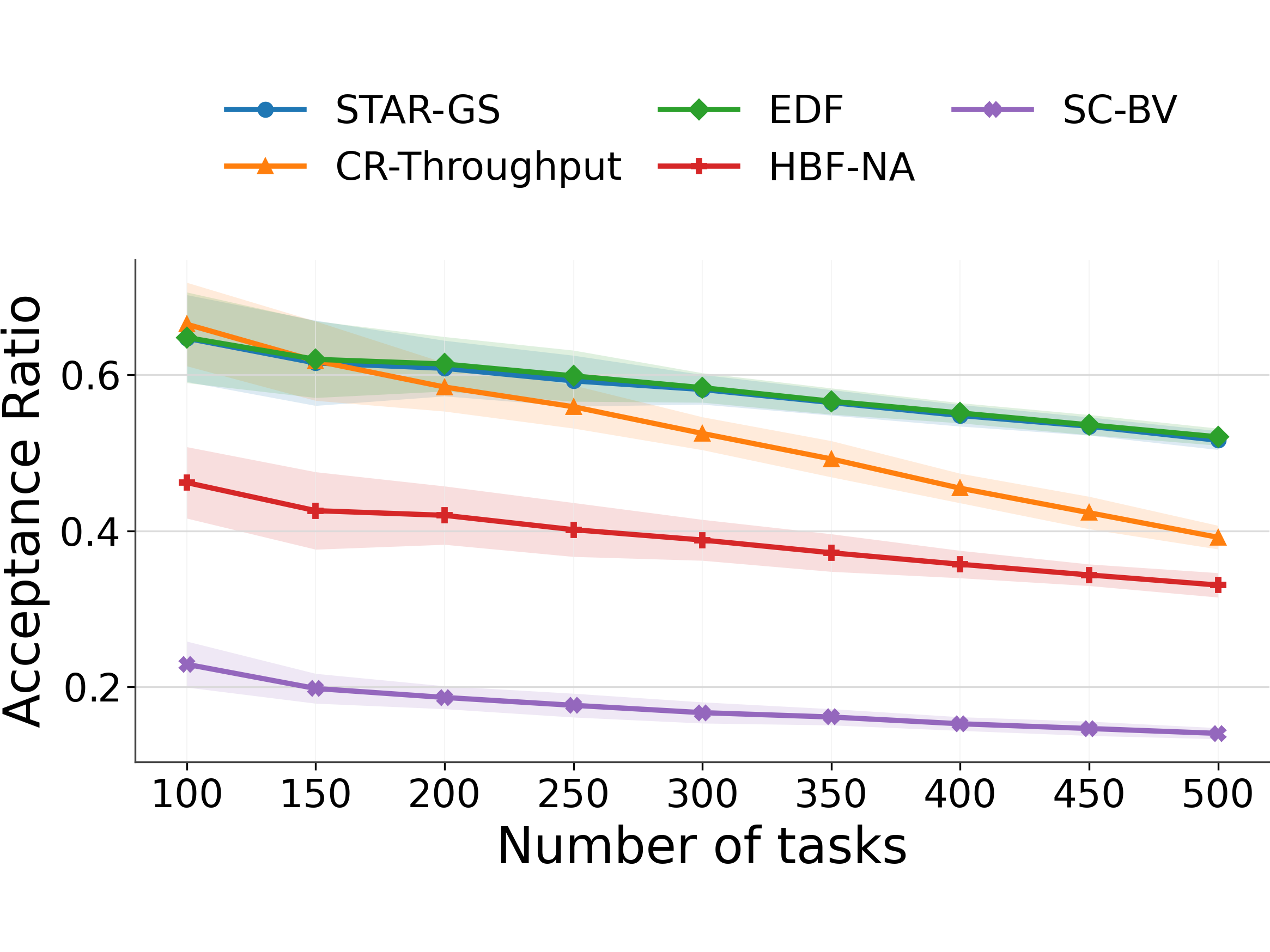}
    }

    \caption{Performance comparison under different system settings. Subfigures (a)--(e) report GSaaS revenue under different system parameters, while subfigure (f) reports the task acceptance ratio as the task volume increases.}
    \label{fig:performance}
\end{figure*}

In Fig.~\ref{fig:performance}, we evaluate STAR-GS under varying system and workload settings, including task count, ground station bandwidth, deadline slack, task demand, bid dispersion, and task acceptance ratio. Unless otherwise specified, each data point is averaged over 10 runs.

Fig.~\ref{fig:performance}(a) compares GSaaS revenue as the number of tasks increases from $100$ to $500$. STAR-GS consistently achieves the highest revenue, and its advantage becomes more pronounced under heavier workloads. At $500$ tasks, EDF reaches about $76.8\%$ of STAR-GS revenue, while HBF-NA and SC-BV achieve only about $51.0\%$ and $22.4\%$, respectively. This shows that deadline-aware scheduling alone is not sufficient for revenue maximization, and that naive GS assignment or fixed satellite-centric station selection can lead to inefficient resource usage. Overall, STAR-GS scales better by jointly considering task value, scheduling feasibility, and GS assignment.

Fig.~\ref{fig:performance}(b) shows the impact of ground station bandwidth. When bandwidth is extremely limited, most methods obtain little or no revenue because few tasks are schedulable. As bandwidth increases, all methods improve, but STAR-GS benefits more from the additional capacity and remains the best-performing method. EDF is the closest baseline in the medium- and high-bandwidth regimes, but it still lags behind because it prioritizes urgent tasks rather than high-value tasks. This confirms that additional bandwidth is more effectively utilized when admission and assignment are both value-aware and feasibility-aware.

Fig.~\ref{fig:performance}(c) evaluates the impact of deadline slack. Larger slack windows increase revenue for all methods because tasks have more feasible contact opportunities and are easier to schedule. STAR-GS consistently outperforms the baselines across different slack settings, with EDF again being the closest competitor. The gap indicates that relaxed deadlines alone do not eliminate the need for bid-aware admission, since the provider still needs to decide which feasible tasks are most valuable to serve.

Fig.~\ref{fig:performance}(d) shows the impact of task demand. Revenue first increases as task demand grows from low to moderate levels, because each admitted task contributes more potential revenue while remaining schedulable. When the demand becomes too large, however, the feasible task set shrinks quickly and revenue drops for most methods. In this high-demand regime, CR-Throughput becomes more competitive because its deliverability-oriented rule favors tasks that can still be completed under tight resource constraints.

Fig.~\ref{fig:performance}(e) shows the impact of bid dispersion. As the bid range becomes wider, all methods obtain higher revenue, but STAR-GS gains the most because it directly exploits bid heterogeneity during admission. The increasing gap between STAR-GS and bid-agnostic or throughput-oriented baselines confirms the benefit of value-aware task selection when task values differ significantly.

Fig.~\ref{fig:performance}(f) compares the task acceptance ratio as the task volume increases. All methods admit a smaller fraction of tasks under heavier workloads due to stronger competition for limited bandwidth and visibility windows. STAR-GS maintains an acceptance ratio close to EDF while achieving much higher revenue. This suggests that STAR-GS does not improve revenue simply by rejecting more tasks; instead, it selects feasible tasks with higher value more effectively.

\begin{figure}[t]
    \centering
    \subfloat[GSaaS revenue]{\includegraphics[width=0.49\linewidth]{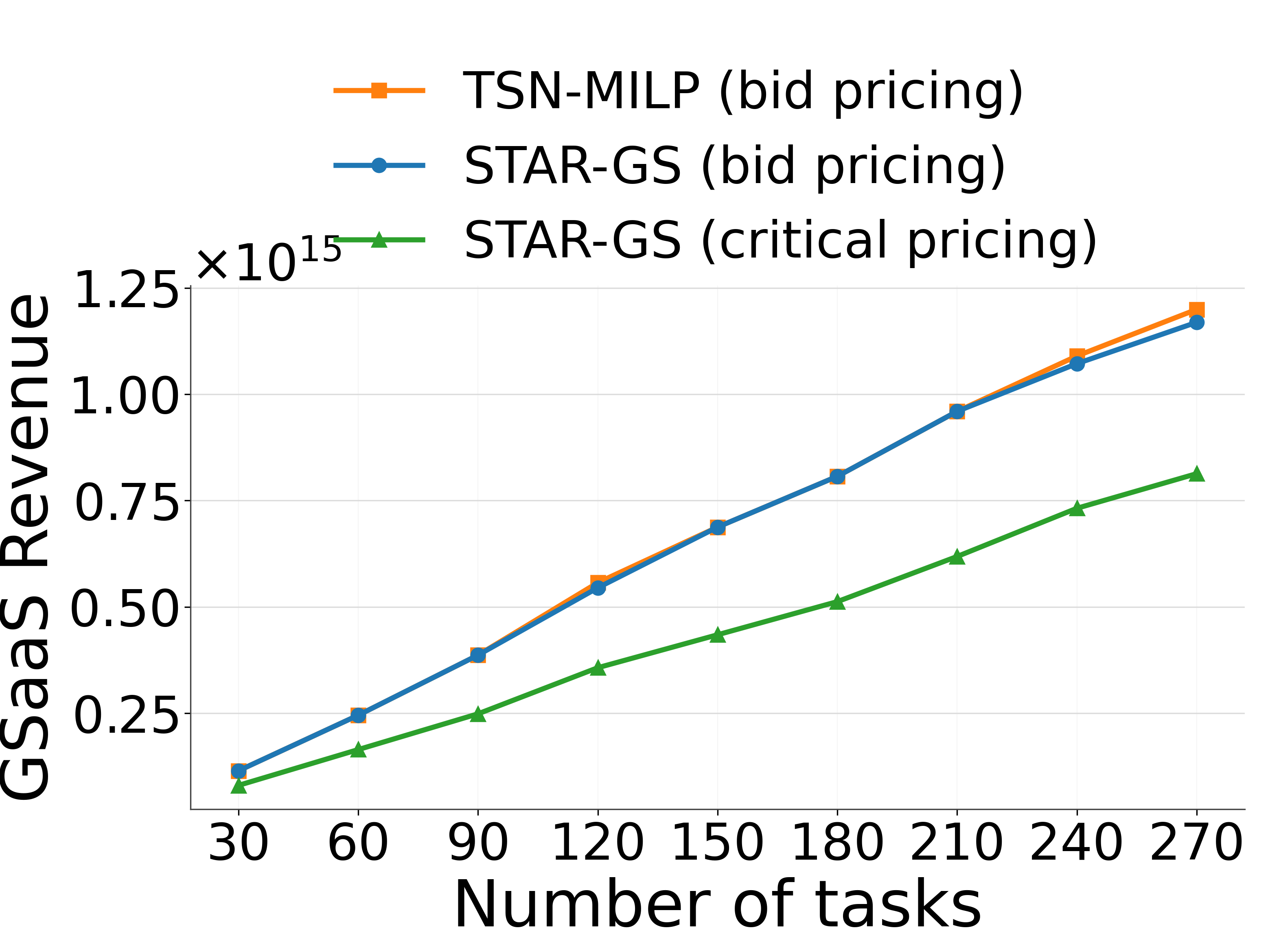}\label{fig:milp-objective}}
    \hfill
    \subfloat[Runtime]{\includegraphics[width=0.45\linewidth]{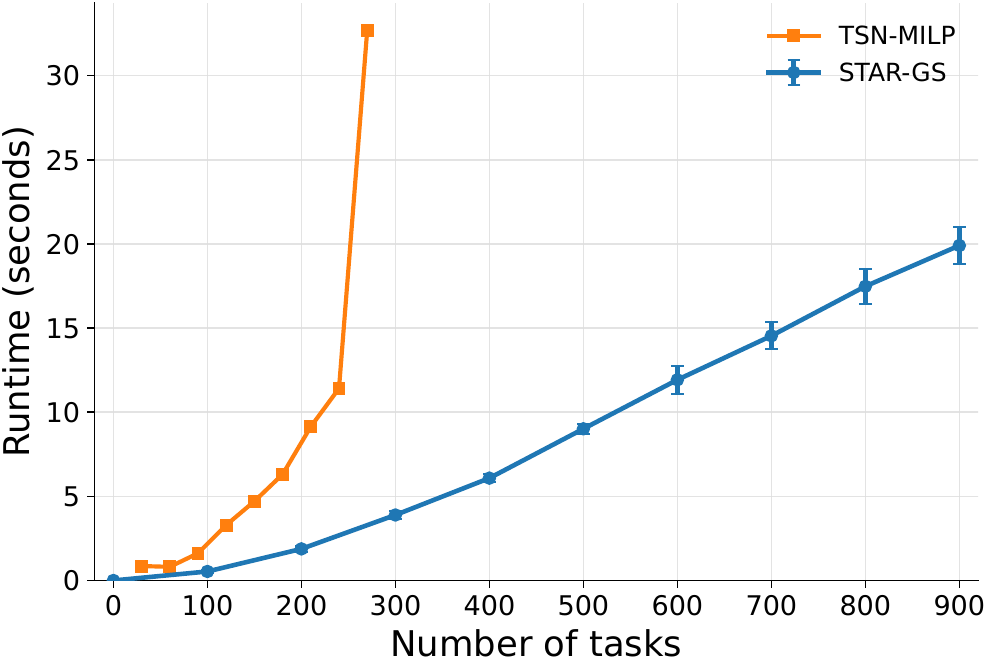}\label{fig:milp-runtime}}
    \caption{Comparison between STAR-GS and TSN-MILP. (a) GSaaS revenue under $30$--$270$ tasks. (b) Runtime of TSN-MILP under up to $270$ tasks and STAR-GS under up to $900$ tasks.}
    \label{fig:milp-comparison}
\end{figure}

We further compare STAR-GS with TSN-MILP to evaluate solution quality and computational efficiency against an optimization benchmark. Since TSN-MILP has substantially higher computational cost, we evaluate it on instances containing $30$ to $270$ tasks. To further examine the scalability of STAR-GS, we extend its runtime evaluation to workloads containing up to $900$ tasks. TSN-MILP jointly optimizes task admission, ground-station assignment, and bandwidth allocation using a mixed-integer formulation.

Fig.~\ref{fig:milp-comparison}(a) compares the GSaaS revenue achieved by STAR-GS and TSN-MILP under different numbers of tasks. The TSN-MILP and STAR-GS curves with bid pricing report the revenue calculated from submitted bids, while the STAR-GS curve with critical pricing reports the truthful payment computed by the proposed truthful mechanism. As the number of tasks increases from $30$ to $270$, the revenue of all methods increases steadily. Under bid pricing, STAR-GS achieves revenue very close to TSN-MILP, indicating that STAR-GS can closely approximate the MILP solution in task admission and resource allocation. In contrast, STAR-GS with critical pricing obtains lower revenue because the critical price is the payment required to guarantee truthfulness, rather than the submitted bid.

Fig.~\ref{fig:milp-comparison}(b) compares their runtime and further evaluates the workload scalability of STAR-GS. The runtime of TSN-MILP increases rapidly, from less than $1$ second at $30$ tasks to more than $32$ seconds at $270$ tasks. In contrast, STAR-GS exhibits a substantially slower and smoother increase in runtime. Its runtime remains below $4$ seconds at approximately $300$ tasks and reaches only about $20$ seconds at $900$ tasks. These results show that STAR-GS achieves near-MILP revenue while requiring substantially less computation time, making it more suitable for large-scale GSaaS scheduling.

Additional evaluations of operational robustness and economic efficiency are presented in Appendices~\ref{app:robustness} and~\ref{app:economic_utility}, respectively.

\section{Conclusion}
\label{sec:conclusion}

This paper presented STAR-GS, a truthful and visibility-aware scheduling mechanism for revenue-driven GSaaS resource allocation. By combining bid-aware admission, best-fit ground station assignment, EDF-based schedulability checking, and critical payment pricing, STAR-GS jointly addresses feasibility and strategic bidding under dynamic satellite--ground visibility constraints. Experiments with STK-generated contact windows show that STAR-GS consistently improves provider revenue over heuristic baselines; for example, under 500 tasks, the closest baseline achieves only about 76.8\% of STAR-GS revenue, while naive bid-based and satellite-centric baselines achieve substantially less. STAR-GS also obtains revenue close to the MILP benchmark while reducing runtime significantly, e.g., about 5 seconds versus over 32 seconds at 270 tasks. These results suggest that provider-side GSaaS scheduling benefits from treating pricing and deadline-aware feasibility as coupled decisions. Future work will extend STAR-GS to online request arrivals, time-varying channel conditions, dynamic ground-station capacity, adaptive reserve pricing, and multi-ground-station task splitting.

\appendices

\section{Robustness under Operational Stress}
\label{app:robustness}

We further evaluate STAR-GS under two stressed operational settings. First, in the heterogeneous-capacity setting, different GSs are assigned unequal bandwidth capacities while maintaining the same average capacity of $80$ Mbps as in the homogeneous setting. Second, in the bursty-arrival setting, $70\%$ of the tasks are concentrated around three burst periods, while the remaining $30\%$ are uniformly distributed across the planning horizon. All other parameters retain their default values, and each result is averaged over $10$ independent random seeds.

\begin{figure*}[t]
    \centering
    \subfloat[Heterogeneous GS capacities.]{\includegraphics[width=0.3\textwidth]{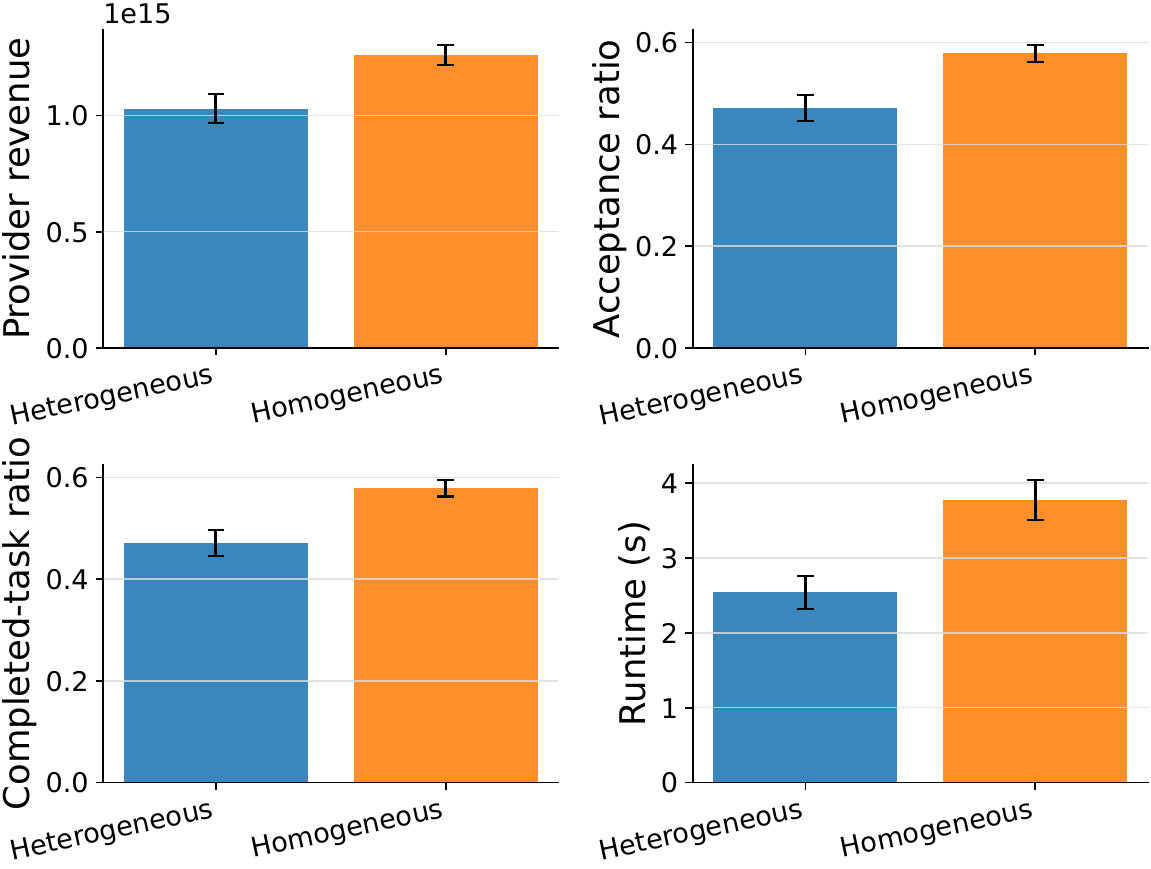}}
    \hfill
    \subfloat[Bursty task arrivals.]{\includegraphics[width=0.3\textwidth]{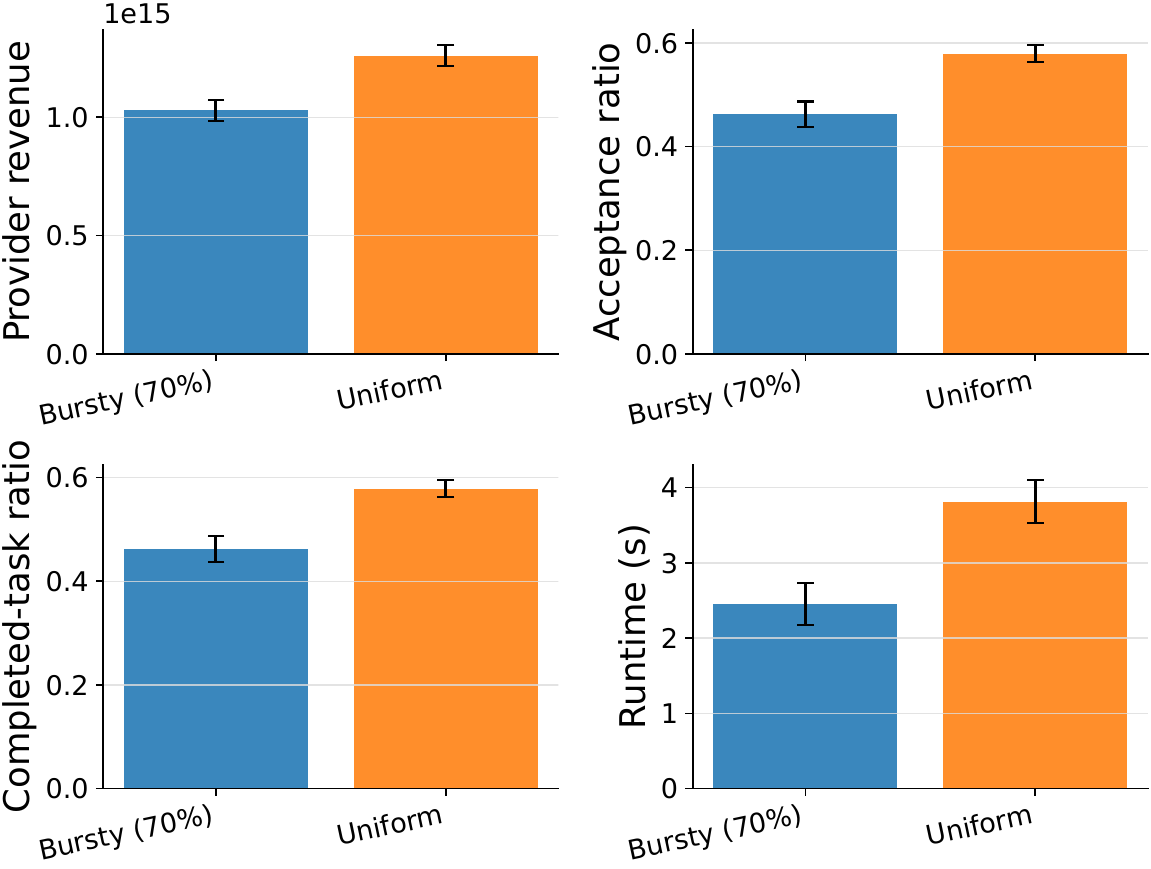}}
    \hfill
    \subfloat[Task-arrival distributions.]{\includegraphics[width=0.3\textwidth]{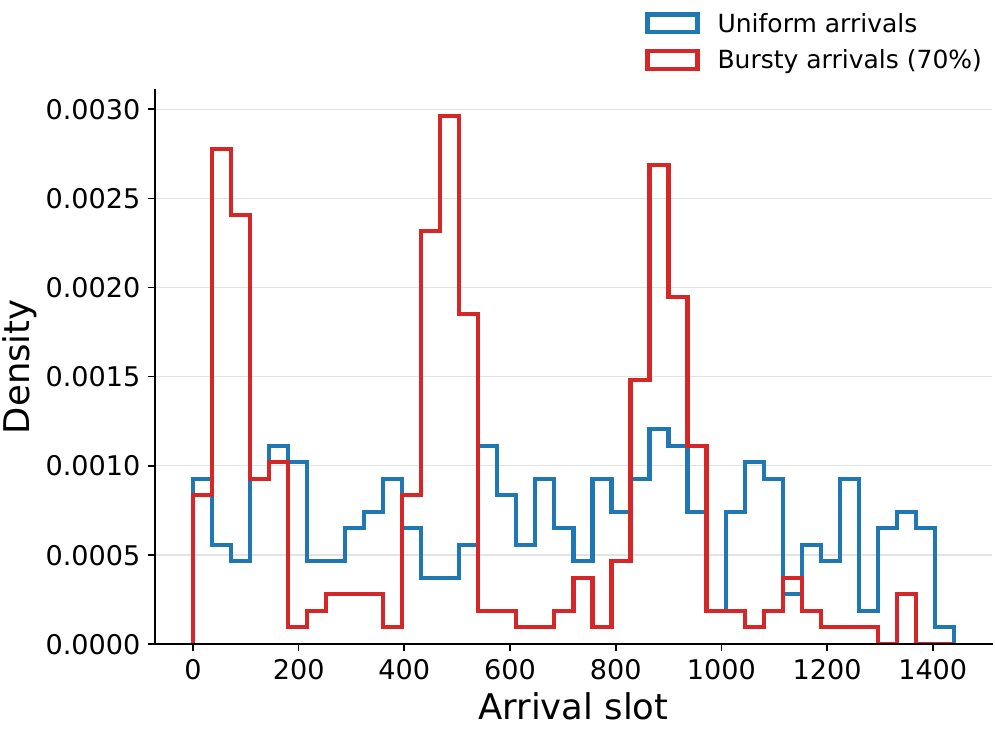}}
    \caption{Robustness of STAR-GS under operational stress. Subfigure (a) compares heterogeneous and homogeneous GS capacities, subfigure (b) compares bursty and uniform task arrivals, and subfigure (c) illustrates the task-arrival distributions of one representative paired instance. In subfigures (a) and (b), each bar reports the mean over $10$ independent random seeds, and the error bars denote $\pm 1$ population standard deviation.}
    \label{fig:robustness}
\end{figure*}

As shown in Fig.~\ref{fig:robustness}(a), heterogeneous GS capacities reduce provider revenue, task acceptance ratio, and completed-task ratio relative to the homogeneous setting. Although the two settings have the same average capacity, capacity imbalance makes some GSs resource bottlenecks while leaving capacity at other GSs less effectively utilized because satellite--ground visibility is location dependent. Nevertheless, STAR-GS continues to produce feasible schedules under heterogeneous capacities. Its runtime is lower in this setting because fewer tasks pass the feasibility-aware admission process.

Fig.~\ref{fig:robustness}(c) confirms that the bursty workload exhibits three pronounced arrival peaks, in contrast to the approximately uniform distribution of the default workload. As shown in Fig.~\ref{fig:robustness}(b), this temporal concentration creates stronger contention for overlapping visibility windows and GS bandwidth. Consequently, provider revenue, task acceptance ratio, and completed-task ratio decrease relative to the uniform-arrival setting. STAR-GS nevertheless preserves scheduling feasibility by rejecting tasks that cannot be completed before their deadlines. The lower runtime mainly results from the smaller number of feasible and admitted tasks.

\section{Economic Efficiency and User Utility}
\label{app:economic_utility}

Beyond provider revenue, we further examine how the economic value generated by completed tasks is shared between the GSaaS provider and task owners. We report the gross completed value, the provider revenue collected through critical payments, and the aggregate utility of winning task owners defined in Eq.~\eqref{user_utility}.

\begin{figure}[t]
    \centering
    \includegraphics[width=0.55\columnwidth]{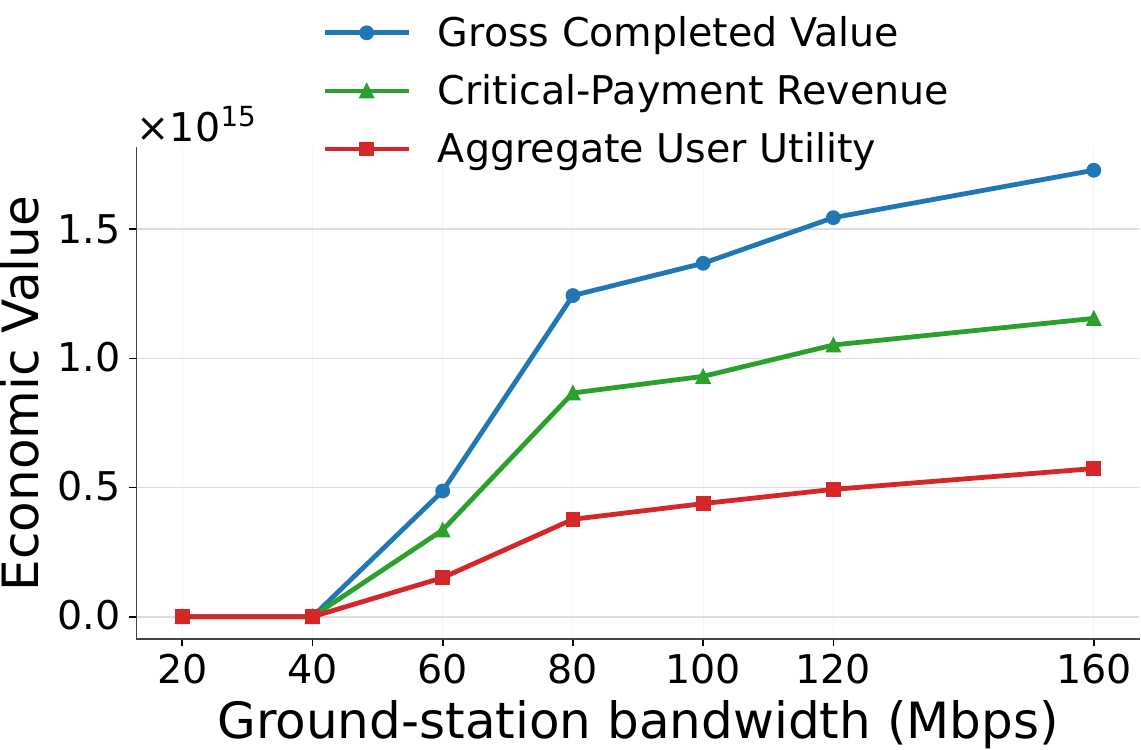}
    \caption{Economic-value decomposition under different ground-station bandwidths.}
    \label{fig:economic_decomposition}
\end{figure}

As shown in Fig.~\ref{fig:economic_decomposition}, little economic value is generated when the GS bandwidth is below $40$ Mbps because the scarcity of communication resources allows only a few tasks to be completed. As the bandwidth increases, more tasks become schedulable, leading to simultaneous increases in provider revenue and aggregate user utility. The figure also shows that critical-payment revenue and aggregate user utility sum to the gross completed value, consistent with our economic model. This illustrates how the value generated by completed tasks is shared between the provider and task owners. Therefore, additional GS capacity benefits both sides of the market.

\bibliographystyle{IEEEtran}
\bibliography{ref}

@IEEEtranBSTCTL{IEEEexample:BSTcontrol,
    CTLuse_forced_etal       = "yes",
    CTLmax_names_forced_etal = "3",
    CTLnames_show_etal       = "3" 
}

@INPROCEEDINGS{10858567,
  author={Zhao, Heng and Cen, Sheng and Zhu, Yifei},
  booktitle={2024 IEEE 32nd International Conference on Network Protocols}, 
  title={The Space Above the Sky: Uniting Global-Scale Ground Station as a Service for Efficient Orbital Data Processing}, 
  year={2024},
  volume={},
  number={},
  pages={1-11},
  doi={10.1109/ICNP61940.2024.10858567}}

@INPROCEEDINGS{11044527,
  author={Chou, Yi Ching and Chen, Long and Wang, Hengzhi and Wang, Feng and Fang, Hao and Zhao, Haoyuan and Zhang, Miao and Fan, Xiaoyi},
  booktitle={IEEE INFOCOM 2025 - IEEE Conference on Computer Communications}, 
  title={Commercial Dishes Can Be My Ladder: Sustainable and Collaborative Data Offloading in LEO Satellite Networks}, 
  year={2025},
  volume={},
  number={},
  pages={1-10},
  doi={10.1109/INFOCOM55648.2025.11044527}}

@INPROCEEDINGS{11044563,
  author={Gu, Chenwei and Wu, Qian and Lai, Zeqi and Li, Hewu and Weng, Yuxuan and Liu, Weisen and Li, Jihao and Liu, Jun and Li, Yuanjie},
  booktitle={IEEE INFOCOM 2025 - IEEE Conference on Computer Communications}, 
  title={NovaPlan: An Efficient Plan of Renting Ground Stations for Emerging LEO Satellite Networks}, 
  year={2025},
  volume={},
  number={},
  pages={1-10},
  doi={10.1109/INFOCOM55648.2025.11044563}}

@article{s25247538,
author = {Ronen, Rony and Ben-Moshe, Boaz},
year = {2025},
month = {12},
pages = {1-24},
title = {Maximizing Nanosatellite Throughput via Dynamic Scheduling and Distributed Ground Stations},
volume = {25},
journal = {Sensors},
doi = {10.3390/s25247538}
}

@article{aerospace11010083,
  title={Mixed-Integer Linear Programming Model for Scheduling Missions and Communications of Multiple Satellites},
  author={Min-Woo Lee and Seung-Duk Yu and Kybeom Kwon and Myung Bok Lee and Junghyun Lee and Heungseob Kim},
  journal={Aerospace},
  year={2024},
  pages={1-32}
}

@INPROCEEDINGS{10901122,
  author={Bhandari, Sovit and Vu, Thang X. and Chatzinotas, Symeon},
  booktitle={GLOBECOM 2024 - 2024 IEEE Global Communications Conference}, 
  title={LEO Satellite-assisted Task Offloading for a Near Real-time Earth Observation Service}, 
  
  volume={},
  number={},
  pages={3033-3038},
  year={2024}}

@article{myung2008single,
author = {Myung, Hyung and Goodman, David},
year = {2008},
month = {10},
pages = {1-185},
title = {Single Carrier FDMA: A New Air Interface for Long Term Evolution},
isbn = {9780470724491},
doi = {10.1002/9780470758717}
}

@article{a2fd2512-18e4-3e86-a4f2-2f4443994eb2,
  title={Counterspeculation, Auctions, And Competitive Sealed Tenders},
  author={William S. Vickrey},
  journal={Journal of Finance},
  year={1961},
  volume={16},
  pages={8-37}
}

@article{doi:10.1137/0203025,
author = {Johnson, D. S. and Demers, A. and Ullman, J. D. and Garey, M. R. and Graham, R. L.},
title = {Worst-Case Performance Bounds for Simple One-Dimensional Packing Algorithms},
year = {1974},
publisher = {Society for Industrial and Applied Mathematics},
address = {USA},
volume = {3},
number = {4},
issn = {0097-5397},
doi = {10.1137/0203025},
journal = {SIAM Journal on Computing},
month = dec,
pages = {299–325},
numpages = {27}
}

@article{myerson1981optimal,
  title={Optimal auction design},
  author={Myerson, Roger B},
  journal={Mathematics of operations research},
  volume={6},
  number={1},
  pages={58--73},
  year={1981},
  publisher={INFORMS}
}

@article{nisan2007algorithmic,
author = {Roughgarden, Tim},
title = {Algorithmic game theory},
year = {2010},
issue_date = {July 2010},
publisher = {Association for Computing Machinery},
address = {New York, NY, USA},
volume = {53},
number = {7},
issn = {0001-0782},
doi = {10.1145/1785414.1785439},
journal = {Commun. ACM},
pages = {78--86},
numpages = {9}
}

@ARTICLE{11329400,
  author={Ren, Yingying and Li, Qiuli and Zhang, Yue and Li, Jiting and Pedrycz, Witold and Nagaratnam Suganthan, Ponnuthurai and Mallipeddi, Rammohan and Song, Yanjie},
  journal={IEEE Transactions on Systems, Man, and Cybernetics: Systems}, 
  title={An Evolutionary Algorithm With Memory Guidance for Data Transmission Scheduling Optimization in Communication Satellite Network}, 
  year={2026},
  volume={56},
  number={2},
  pages={1354-1368},
  doi={10.1109/TSMC.2025.3647265}}

@INPROCEEDINGS{10115903,
  author={Boschetti, Nicolò and Smethurst, Chelsea and Epiphaniou, Gregory and Maple, Carsten and Sigholm, Johan and Falco, Gregory},
  booktitle={2023 IEEE Aerospace Conference}, 
  title={Ground Station as a Service Reference Architectures and Cyber Security Attack Tree Analysis}, 
  year={2023},
  volume={},
  number={},
  pages={1-12},
  doi={10.1109/AERO55745.2023.10115903}}

@INPROCEEDINGS{10154485,
  author={Filho, Edilson and Silveira, Jarbas and Marcon, César},
  booktitle={2023 IEEE 24th Latin American Test Symposium}, 
  title={Blockchain Applied In Decentralization of Ground Stations To Educational Nanosatellites}, 
  year={2023},
  volume={},
  number={},
  pages={1-5},
  doi={10.1109/LATS58125.2023.10154485}}

@article{kim2025scalablegroundstationselection,
  title={Scalable Ground Station Selection for Large LEO Constellations},
  author={Kim, Grace Ra and Eddy, Duncan and Srinivas, Vedant and Kochenderfer, Mykel J},
  journal={arXiv preprint arXiv:2510.03438},
  year={2025}
}

@INPROCEEDINGS{10001260,
  author={Velusamy, Gandhimathi and Lent, Ricardo},
  booktitle={GLOBECOM 2022 - 2022 IEEE Global Communications Conference}, 
  title={AI-Based Ground Station-as-a-Service for Optimal Cost-Latency Satellite Data Downloading}, 
  year={2022},
  volume={},
  number={},
  pages={2363-2368},
  doi={10.1109/GLOBECOM48099.2022.10001260}}

@article{CHENG2026123200,
author = {Cheng, Jiaqi and Fan, Ming and Yi, Gu and Tang, Wei and Luo, Qizhang and Wang, Yalin and Wang, Xinwei and Wu, Guohua},
year = {2026},
pages = {1-19},
title = {UHTS-DRL: A deep reinforcement learning framework for integrated agile satellite observation and data transmission scheduling},
volume = {740},
journal = {Information Sciences},
doi = {10.1016/j.ins.2026.123200}
}

@inproceedings{10.1145/3680207.3765249,
author = {Sun, Zehua and Ni, Tao and Hu, Pengfei and Gu, Tao and Xu, Weitao},
title = {SpaceSched: A Constellation-Wide Scheduling System for Resolving Ground Track Congestion in Remote Sensing},
year = {2025},
isbn = {9798400711299},
doi = {10.1145/3680207.3765249},
booktitle = {Proceedings of the 31st Annual International Conference on Mobile Computing and Networking},
pages = {832–847},
numpages = {16},
location = {Kerry Hotel, Hong Kong, Hong Kong, China}
}

@article{REN2025126303,
title = {A distance similarity-based genetic optimization algorithm for satellite ground network planning considering feeding mode},
journal = {Expert Systems with Applications},
volume = {268},
pages = {1-17},
year = {2025},
issn = {0957-4174},
author = {Yingying Ren and Qiuli Li and Yangyang Guo and Witold Pedrycz and Lining Xing and Anfeng Liu and Yanjie Song},
}

@ARTICLE{10979921,
  author={Yu, Jiguo and Liu, Shun and Zou, Yifei and Wang, Guijuan and Hu, Chunqiang},
  journal={IEEE Internet of Things Journal}, 
  title={Auction Theory and Game Theory Based Pricing of Edge Computing Resources: A Survey}, 
  year={2025},
  volume={12},
  number={16},
  pages={32394-32418},
  doi={10.1109/JIOT.2025.3565539}}

@ARTICLE{10887300,
  author={Wang, Xueyi and Wang, Xingwei and Wang, Chen and Zeng, Rongfei and Ma, Lianbo and He, Qiang and Huang, Min},
  journal={IEEE Transactions on Mobile Computing}, 
  title={Truthful Online Combinatorial Auction-Based Mechanisms for Task Offloading in Mobile Edge Computing}, 
  year={2025},
  volume={24},
  number={7},
  pages={6488-6502},
  doi={10.1109/TMC.2025.3542135}}

@ARTICLE{929853,
  author={Akyildiz, I.F. and Morabito, G. and Palazzo, S.},
  journal={IEEE/ACM Transactions on Networking}, 
  title={TCP-Peach: a new congestion control scheme for satellite IP networks}, 
  year={2001},
  volume={9},
  number={3},
  pages={307-321},
  doi={10.1109/90.929853}}

@ARTICLE{10520815,
  author={Yang, Wenjun and Cai, Lin and Shu, Shengjie and Pan, Jianping},
  journal={IEEE Transactions on Mobile Computing}, 
  title={Mobility-Aware Congestion Control for Multipath QUIC in Integrated Terrestrial Satellite Networks}, 
  year={2024},
  volume={23},
  number={12},
  pages={11620-11634},
  doi={10.1109/TMC.2024.3397164}}

@ARTICLE{6094268,
  author={Shah, Devavrat and Tse, David N. C. and Tsitsiklis, John N.},
  journal={IEEE Transactions on Information Theory}, 
  title={Hardness of Low Delay Network Scheduling}, 
  year={2011},
  volume={57},
  number={12},
  pages={7810-7817},
  doi={10.1109/TIT.2011.2168897}}

@misc{aws_contact,
  author       = {{Amazon Web Services}},
  title        = {{AWS Ground Station User Guide: Work with Contacts}},
  howpublished = {[Online]. Available: \url{https://docs.aws.amazon.com/ground-station/latest/ug/contacts.html}},
  note         = {Accessed: May 21, 2026}
}

@misc{agi_stk,
  author       = {{Analytical Graphics, Inc.}},
  title        = {{Systems Tool Kit (STK)}},
  howpublished = {[Online]. Available: \url{https://www.agi.com/products/stk}},
  note         = {Accessed: May 21, 2026}
}

@ARTICLE{10945753,
  author={Hui, Mingming and Zhai, Shenghua and Wang, Daqing and Hui, Tengfei and Wang, Wei and Du, Panpan and Gong, Fengkui},
  journal={IEEE Internet of Things Journal}, 
  title={A Review of LEO-Satellite Communication Payloads for Integrated Communication, Navigation, and Remote Sensing: Opportunities, Challenges, Future Directions}, 
  year={2025},
  volume={12},
  number={12},
  pages={18954-18992},
  doi={10.1109/JIOT.2025.3553942}}

@ARTICLE{9442378,
  author={Centenaro, Marco and Costa, Cristina E. and Granelli, Fabrizio and Sacchi, Claudio and Vangelista, Lorenzo},
  journal={IEEE Communications Surveys \& Tutorials}, 
  title={A Survey on Technologies, Standards and Open Challenges in Satellite IoT}, 
  year={2021},
  volume={23},
  number={3},
  pages={1693-1720},
  doi={10.1109/COMST.2021.3078433}}

@ARTICLE{9840374,
  author={Prol, F. S. and Ferre, R. Morales and Saleem, Z. and Välisuo, P. and Pinell, C. and Lohan, E. S. and Elsanhoury, M. and Elmusrati, M. and Islam, S. and Çelikbilek, K. and Selvan, K. and Yliaho, J. and Rutledge, K. and Ojala, A. and Ferranti, L. and Praks, J. and Bhuiyan, M. Z. H. and Kaasalainen, S. and Kuusniemi, H.},
  journal={IEEE Access}, 
  title={Position, Navigation, and Timing (PNT) Through Low Earth Orbit (LEO) Satellites: A Survey on Current Status, Challenges, and Opportunities}, 
  year={2022},
  volume={10},
  number={},
  pages={83971-84002},
  doi={10.1109/ACCESS.2022.3194050}}

@article{osoro2021techno,
  title={A techno-economic framework for satellite networks applied to low earth orbit constellations: Assessing Starlink, OneWeb and Kuiper},
  author={Osoro, Ogutu B and Oughton, Edward J},
  journal={IEEE Access},
  volume={9},
  pages={141611--141625},
  year={2021},
  publisher={IEEE}
}

@misc{planet_tasking_api,
  author       = {{Planet Labs PBC}},
  title        = {{Tasking API}},
  howpublished = {[Online]. Available: \url{https://docs.planet.com/develop/apis/tasking/}},
  note         = {Accessed: May 21, 2026}
}

@misc{skyfi_tasking,
  author       = {{SkyFi}},
  title        = {{Satellite Tasking Explained: How to Order Custom Satellite Imagery}},
  howpublished = {[Online]. Available: \url{https://skyfi.com/en/blog/satellite-tasking-explained}},
  note         = {Accessed: May 21, 2026}
}

@misc{mathworks_walker,
  author       = {{MathWorks}},
  title        = {{Constellation Modeling with the Orbit Propagator Block}},
  howpublished = {[Online]. Available: \url{https://www.mathworks.com/help/aeroblks/constellation-modeling-with-the-orbit-propagator-block.html}},
  note         = {Accessed: May 22, 2026}
}

@ARTICLE{363d0a634b234a96a2f672e30e105fe3,
  author={Alhusenat, Ahmad Y. and Tian, Jinjin and Rababah, Hana and Wang, Qizhou and You, Lei and Zhang, Xingjun and Lei, Lei and Chatzinotas, Symeon},
  journal={IEEE Open Journal of Vehicular Technology}, 
  title={Towards Sustainable LEO Satellite Operations: A Degradation-Aware Scheduling Framework for DoD Control and Energy Optimization}, 
  year={2026},
  volume={},
  number={},
  pages={1-16},
  doi={10.1109/OJVT.2026.3693843}}

@INPROCEEDINGS{Duncan,
  author={Eddy, Duncan and Ho, Michelle and Kochenderfer, Mykel J.},
  booktitle={2025 IEEE Aerospace Conference}, 
  title={Optimal Ground Station Selection for Low-Earth Orbiting Satellites}, 
  year={2025},
  volume={},
  number={},
  pages={1-13},
  doi={10.1109/AERO63441.2025.11068558}}

@INPROCEEDINGS{4927011,
  author={Izakian, Hesam and Ladani, Behrouz Tork and Zamanifar, Kamran and Abraham, Ajith and Snasel, Vaclav},
  booktitle={2009 IEEE Symposium on Computational Intelligence in Scheduling}, 
  title={A continuous double auction method for resource allocation in computational grids}, 
  year={2009},
  volume={},
  number={},
  pages={29-35},
  doi={10.1109/SCIS.2009.4927011}}

\vfill
\end{document}